\definecolor{myblue}{rgb}{0,0,0.5}
\newtheorem{theorem}{Theorem}[section]
\newtheorem{corollary}[theorem]{Corollary}
\newtheorem{proposition}[theorem]{Proposition}
\newtheorem{lemma}[theorem]{Lemma}
\newtheorem{remark}[theorem]{Remark}
\newtheorem{observation}[theorem]{Observation}
\newtheorem{definition}[theorem]{Definition}
\newtheorem{claim}[theorem]{Claim}
\newtheorem{problem}[theorem]{Problem}
\newtheorem{question}[theorem]{Question}
\newtheorem{notation}[theorem]{Notation}
\newtheorem{beispiel}[theorem]{Example}
\newtheorem{algorithm}[theorem]{Algorithm}
\newtheorem{conjecture}[theorem]{Conjecture}
\newenvironment{theo}[1][\empty]{\begin{theorem} 
\ifthenelse{\equal{#1}{\empty}} {}  {\normalfont(\itshape #1\normalfont)} \normalfont ~\\}{\end{theorem}}
\newenvironment{cor}[1][\empty]{\begin{corollary} 
\ifthenelse{\equal{#1}{\empty}} {}  {\normalfont(\itshape #1\normalfont)}  \normalfont~\\}{\end{corollary}}
\newenvironment{prop}[1][\empty]{\begin{proposition} 
\ifthenelse{\equal{#1}{\empty}} {}  {\normalfont(\itshape #1\normalfont)} \normalfont ~\\}{\end{proposition}}
\newenvironment{lem}[1][\empty]{\begin{lemma} 
\ifthenelse{\equal{#1}{\empty}} {}  {\normalfont(\itshape #1\normalfont)} \normalfont ~\\}{\end{lemma}}
\newenvironment{prb}[1][\empty]{\begin{problem} 
\ifthenelse{\equal{#1}{\empty}} {}  {\normalfont(\itshape #1\normalfont)} \normalfont ~\\}{\end{problem}}
\newenvironment{nota}[1][\empty]{\begin{notation} 
\ifthenelse{\equal{#1}{\empty}} {} {\normalfont(\itshape #1\normalfont)} \normalfont ~\\}{\end{notation}}
\newenvironment{proof}[1][\empty]{\ifthenelse{\equal{#1}{\empty}} {\paragraph{Proof.}~\\}  
{\paragraph{Proof of #1 }~\\}} {$\hfill \Box$}
\newcommand {\ones}[0] {\mathds{1}}
\newcommand {\N}[0] {\mathbb{N}}
\newcommand {\R}[0] {\mathbb{R}}
\newcommand {\Q}[0] {\mathbb{Q}}
\newcommand {\Z}[0] {\mathbb{Z}}
\renewcommand {\S}[0] {\mathbb{S}}
\newcommand {\B}[0] {\mathbb{B}}
\newcommand {\NP}[0] {\mathbb{NP}}
\renewcommand{\P}[0] {\mathbb{P}}
\newcommand{\conv}[0] {\mathrm{conv}}
\newcommand{\aff}[0] {\mathrm{aff}}
\newcommand{\lin}[0] {\mathrm{lin}}
\newcommand{\pos}[0] {\mathrm{pos}}
\newcommand{\ext}[0] {\mathrm{ext}}
\newcommand{\argmax}[0] {\mathrm{arg}\max}
\newcommand{\bd}[0]{\mathrm{bd}}
\renewcommand{\int}[0]{\mathrm{int}}
\newcommand{\relint}[0]{\mathrm{relint}}
\newcommand{\CA}[0]{\mathcal{A}}
\newcommand{\CH}[0]{\mathcal{H}}
\newcommand{\CP}[0]{\mathcal{P}}
\newcommand{\CV}[0]{\mathcal{V}}
\newcommand{\normmax}[1][\empty]{\ifthenelse{\equal{#1}{\empty}} {\textsc{Normmax}}  
{\textsc{Normmax}$_{#1}$}}
\renewcommand{\vector}[1]{\left(\begin{array}{c} #1 \end{array}\right)}
\begin{document}
\selectlanguage{english}
\title{Fixed Parameter Complexity and Approximability of \\ Norm Maximization\footnote{This work was initiated during the 10th INRIA-McGill workshop on Computational Geometry.}}
\author{Christian Knauer\footnote{Institut für Informatik, Universität Bayreuth, \texttt{christian.knauer@uni-bayreuth.de}} \and Stefan König\footnote{Zentrum Mathematik, Technische Universität München, \texttt{koenig@ma.tum.de}} \and Daniel Werner\footnote{Freie Universität Berlin
Fachbereich Mathematik und Informatik, \texttt{daniel.werner@fu-berlin.de}}}
\date{\today}

\maketitle

\begin{center}
\textbf{Abstract.}
\end{center}
\vspace{-0.9cm}
\paragraph{} The problem of maximizing the $p$-th power of a $p$-norm over a halfspace-presented polytope in $\R^d$ is a convex maximization problem which plays a fundamental role in computational convexity. It has been shown in \cite{ms-86} that this problem is $\NP$-hard for all values $p \in \mathbb{N}$, if the dimension $d$ of the ambient space is part of the input. In this paper, we use the theory of parametrized complexity to analyze how heavily the hardness of norm maximization relies on the parameter $d$.

More precisely, we show that for $p=1$ the problem is fixed parameter tractable 
but that for all $p \in \mathbb{N} \setminus \{1\}$ norm maximization is W[1]-hard.

Concerning approximation algorithms for norm maximization, we show that for fixed accuracy, there is a straightforward approximation algorithm for norm maximization in FPT running time, but there is no FPT approximation algorithm, the running time of which depends polynomially on the accuracy. 

As with the $\NP$-hardness of norm maximization, the W[1]-hardness immediately carries over to various radius computation tasks in Computational Convexity.

\section{Introduction and Preliminaries}

\paragraph{} The problem of computing geometric functionals of polytopes arises in many applications in mathematical programming, operations research, statistics, physics, chemistry or medicine (see e.g.~\cite{gritzmannKlee-handbook} for an overview). Hence, the question how efficiently these functionals can be computed or approximated has been studied extensively, e.g.~in \cite{bgkl-90, b-02, goodBadRadii, gk-93, ms-86}. 

Of particular interest is the problem of maximizing (the $p$-th power of) a $p$-norm over a polytope. Despite its simple formulation, this problem already exhibits the combinatorial properties which are responsible for hardness or tractability of the computation of many important geometric functionals. As for most computational problems on polytopes, the presentation of the input polytope is crucial for the computational complexity of norm maximization: If the input polytope is presented as the convex hull of finitely many points, norm maximization is solvable in polynomial time by the trivial algorithm of computing and comparing the norm of all these points. The situation changes dramatically when the input polytope is presented as the intersection of halfspaces. The present paper is concerned with the investigation of the parametrized complexity of this problem.

For $p\in \N\cup \{\infty\}$, a precise formulation of the norm maximization problem that we consider is as follows:

\begin{prb}[{\normmax[p]}] \label{prb:normmax}%
\begin{tabular}{ll}
\textbf{Input:} &$d \in \N$, $\gamma \in \Q$, rational $\CH$-presentation of a symmetric polytope $P \subseteq \R^d$ \\
\textbf{Parameter:} & $d$ \\
\textbf{Question:}& Is $\max \{\|x\|_p^p: x \in P \} \geq \gamma $?
\end{tabular}
\index{Normmax@\normmax[p]}
\end{prb}

Here, a rational $\CH$-presentation of a polytope is a presentation as intersection of finitely many halfspaces which are defined by inequalities that have only rational coefficients. 

\paragraph{} As shown in \cite{ms-86}, for $p= \infty$ (with the understanding that $\|x\|_\infty^\infty= \|x\|_\infty$), \normmax[\infty] is solvable in polynomial time via Linear Programming.  For all $p \in \N$, on the other hand, \normmax[p] is $\NP$-complete. (When speaking of $\NP$-hardness of parameterized problems, we mean the same decision problem, simply ignoring the parameter.) Moreover, in \cite{bgkl-90}, it is shown that $\NP$-hardness persists for all $p\in \N$ even when the instances are restricted to full-dimensional parallelotopes presented as a Minkowski sum of $d$ linearly independent line segments. Moreover, by \cite{b-02}, there is no polynomial time approximation algorithm for norm maximization for any constant performance ratio, unless $\P= \NP$.

\paragraph{} It is important to note that, as usual in the realm of computational convexity, the dimension $d$ is part of the input and the hardness of \normmax[p]  relies heavily on this fact, especially for the very restricted instances in \cite{bgkl-90}. Indeed, if $d$ is a constant, the obvious brute force algorithm of converting the presentation of $P$ yields a polynomial time algorithm with running time $O(n^d)$, where $n$ denotes the number of halfspaces in the presentation of $P$. However, this algorithm quickly becomes impractical as $n$ grows, even for moderate values of $d$.
The main purpose of this paper is to close the gap between $\NP$-hardness for unbounded dimension and a theoretically polynomial, yet impractical algorithm for fixed dimension.

A suitable tool that allows us to analyze how strongly the hardness of \normmax[p] depends on the parameter $d$ is the theory of Fixed Parameter Tractability. For an introduction to Fixed Parameter Tractability, we refer to the textbooks \cite{fg-06, n-06}. This theory has already been applied successfully to show the intractability of several problems in Computational Geometry even in low dimensions, see e.g.~\cite{pointSetPatternMatching, FPTkCenter, FPTzylinder,  FPTstabbing, FPTsurvey, christianFPTintro}.

\paragraph{} Our analysis of \normmax[p] shows that, although \normmax[p] is $\NP$-hard for all $p\in \N$, the hardness has a different flavor for different types of norms: Whereas hardness of \normmax[1] only comes with the growth of the dimension,  \normmax[p] has to be considered intractable already in small dimensions for all other values of $p$.

More precisely, we prove the following theorem:

\begin{theo}[Fixed-parameter complexity of \normmax] \label{theo:mainTheo}%
\normmax[1] is in FPT, whereas \normmax[p] is W[1]-hard for all $p\in \N \setminus \{1\}$.
\end{theo}

\paragraph{} The presented reduction also shows that in the hard cases no $n^{o(d)}$ algorithm for \normmax[p] exists, unless the \emph{Exponential Time Hypothesis}\footnote{The Exponential Time Hypothesis conjectures that n-variable 3-CNFSAT cannot be solved in $2^{o(n)}$-time; cf. \cite{ip-01}.} is false. Thus, the brute force algorithm for \normmax[p] mentioned above already has the best achievable complexity, if $p \in \N \setminus\{1\}$.

\paragraph{} In this case, one can also ask how strongly the inapproximability result of \cite{b-02} relies on the fact that \normmax[p] is a problem in unbounded dimension. For this purpose, call an algorithm that produces an $\bar{x}\in P$ such that, for some $\beta \in \N$,
$$ \|\bar{x}\|_p^p \geq \left(\frac{\beta-1}{\beta}\right)^p \max\{\|x\|_p^p : x \in P \}$$ 
a $\beta$-approximation-algorithm for \normmax[p]. The proof of the fact that \normmax[1] is in FPT then suggests the following: Replace the unit ball of the $p$-norm by a suitable symmetric polytope which approximates it sufficiently well and use the maximum of this polytopal norm as an approximation for the maximum of the $p$-norm. As polytopal norms can be maximized by solving a linear program for every facet of the unit ball and linear programs can be solved in $T_{LP}(d,n):= O(2^{2^d} n)$ (see \cite{MegiddoLPlinearFixedD}), which is polynomial in $n$ for fixed $d$, this yields an FPT-time approximation algorithm for fixed accuracy $\beta$.

\begin{theo}[Approximation complexity of \normmax] \label{theo:mainTheoApprox}%
Let $p\in \N\setminus\{1\}$. For every fixed $\beta \in \N$, there is a $\beta$-approximation-algorithm for \normmax[p] which 
runs in time $O(\beta^d T_{LP}(d,n))$. Conversely, there is no scheme of $\beta$-approximation-algorithms for \normmax[p] with running time $O(f(d)q(\beta, d, n))$ with a polynomial $q$ and an arbitrary computable function $f$.
\end{theo}

\paragraph{} Hence, although the problem is not in APX, approximation of \normmax[p] is possible for moderate values of $\beta$ and $d$. On the other hand, approximation tends to become costly as soon as the dimension or the desired accuracy grows.

\paragraph{} Finally, analogously to the $\NP$-hardness of \textsc{Normmax}$_p$, the W[1]-hardness of \textsc{Normmax}$_p$ implies the intractability of various problems in Computational Convexity as immediate corollaries. In Section~\ref{sec:corollaries}, we show that for the respective values of $p$, the problems \textsc{Circumradius}$_p$-$\CH$, \textsc{Diameter}$_p$-$\CH$, \textsc{Inradius}$_p$-$\CV$ and \textsc{Width}$_p$-$\CV$ (all parameterized by the dimension) are W[1]-hard.

\paragraph{} This paper is organized as follows. In the remainder of this section, we explain our notation. In Section~\ref{sec:complexityNormmax}, we will analyze the parameterized complexity of \normmax[p], i.e.~we prove Theorem~\ref{theo:mainTheo} and prepare some technical lemmas, which we will also use in Section~\ref{sec:approx} where we prove Theorem~\ref{theo:mainTheoApprox}. Finally, in Section~\ref{sec:corollaries}, we prove the corollaries for the mentioned radius computation tasks.

\paragraph{Notation.\\} The symbols $\N, \Z, \Q$ and $\R$\index{$\N$}\index{$\Z$}\index{$\Q$}\index{$\R$} are used to denote the set of positive integers, integers, rational numbers and real numbers, respectively.

For a positive integer $n\in \N$, we will abbreviate $[n]:= \{1,\dots, n\}$\index{$[n]$}. 

Throughout this paper, we are working in $d$-dimensional real space $\R^d$\index{$\R^d$} and for $A \subseteq \R^d$ we write $\lin(A)$\index{linear hull}\index{$\lin(\cdot)$}, $\aff(A)$\index{affine hull}\index{$\aff(\cdot)$}, $\conv(A)$\index{convex hull}\index{$\conv(\cdot)$}, $\pos(A)$\index{positive hull}\index{$\pos(\cdot)$}, $\int(A)$\index{interior}\index{$\int(\cdot)$}, $\relint(A)$\index{relative interior}\index{$\relint(\cdot)$}, and $\bd(A)$\index{boundary}\index{$\bd(\cdot)$} for the linear, affine, convex or positive hull and the interior, relative interior and the boundary of $A$, respectively. 

For a set $A \subseteq \R^d$, its dimension is $\dim(A):= \dim (\aff(A))$\index{$\dim(\cdot)$}\index{dimension}. Furthermore, for any two sets $A, B \subset \R^d$ and $\rho \in \R$, let $\rho A := \{\rho a: a \in A\}$ and $A+B:= \{a+b: a\in A, b\in B\}$ the $\rho$-\emph{dilatation}\index{dilatation} of $A$ and the \emph{Minkowski sum}\index{$A+B$ (for sets)}\index{Minkowski sum} of $A$ and $B$, respectively. 
We abbreviate  $A + (-B)$ by $A -B$ and  $A+\{c\}$ by $A+c$. 
A set  $K \subseteq \R^d$ is called 0-\emph{symmetric}\index{0-symmetric} if $-K= K$. If there is a $c \in \R^d$ such that $-(c+K)= c+K$ we call $K$ \emph{symmetric}\index{symmetric}.

If a polytope $P \in \CP^d$ is described as a bounded intersection of halfspaces, we say that $P$ is in $\CH$-presentation\index{H-presentation@$\CH$-presentation}. If $P$ is given as the convex hull of finitely many points, we call this a $\CV$-presentation\index{V-presentation@$\CV$-presentation} of $P$.
For a convex set $C\subseteq \R^d$, we let $\ext(C)$ denote  the set of  \emph{extreme points} of $K$.

For $1\leq p < \infty$, the $p$-norm\index{p-norm@$p$-norm} of a point $x= (x_1,\dots, x_d)^T \in \R^d$ is defined as \index{\nnn}

$$\|x\|_p := \left(\sum_{i=1}^d |x_i|^p\right)^{\frac{1}{p}}$$ 
for $p=\infty$, we let $\|x\|_\infty:= \max \{|x_i|: i \in [d]\}$.

For $p \in [1,\infty]$, we write $\B_p^d:= \{x \in \R^d: \|x\|_p \leq 1\}$\index{$\B_p^d$} for the unit ball\index{unit ball} of $\|\cdot\|_p$ and $\S_p^{d-1}:=\{x \in \R^d: \|x\|_p=1\}$\index{sphere}\index{$\S_p^{d-1}$} for the unit sphere in $\R^d$.

For two vectors $x,y\in \R^d$, we use the notation $x^Ty:= \sum_{i=1}^d x_i y_i$\index{$x^Ty$}\index{scalar product}\index{dot product}\index{inner product} for the standard scalar/inner/dot product of $x$ and $y$ and by 
$$H_{\leq}(a,\beta):= \{x \in \R^d: a^T x \leq \beta\}$$
we denote the half-space\index{halfspace}\index{$H_\leq(a,\beta)$} induced by $a\in \R^d$ and $\beta \in \R$, bounded by the hyperplane\index{hyperplane}\index{$H_=(a,\beta)$} $H_{=}(a,\beta):= \{x \in \R^d: a^T x = \beta\}$.

If $X$ is a finite set and $k\in \N$, then $\binom{X}{k}:= \{Y \subseteq X: |Y| = k\}$\index{$\binom{X}{k}$} denotes the set of all subsets of $X$ of cardinality $k$.

The standard basis in  $\R^d$ is denoted by  $\{e_i: i\in [d]\}$\index{$e_i$}\index{$\{e_1,\dots, e_d\}$}\index{standard basis}; the all-ones vector by $\ones:= (1, \dots, 1)^T \in \R^d$\index{$\ones$}.

\paragraph{} We denote by $\P$\index{$\P$} (and $\NP$\index{$\NP$}, respectively) the classes of decision problems that are solvable (verifiable, respectively) in polynomial time. For an account on complexity theory, we refer to \cite{GareyJohnson}. We write FPT\index{FPT}\index{$\text{FPT}$} for the class of fixed-parameter-tractable problems and W[1]\index{$W[1]$}\index{W[1]} for the problems of the first level of the W-hierarchy in the theory of Fixed Parameter Tractability. For an introduction to Fixed Parameter Tractability, we refer to the textbooks \cite{fg-06, n-06}.

\section{Fixed Parameter Complexity of Norm Maximization}
\label{sec:complexityNormmax}
\subsection{Intractability}
\label{sec:hardnessNormmax}
\paragraph{} We will first prove the hardness result for \normmax[p] for $p \geq 2$ via an FPT reduction of the W[1]-complete problem \textsc{Clique} to \normmax[p]. The formal parametrized decision problem of \textsc{Clique} is given in Problem~\ref{prb:clique}; a proof of its W[1]-completeness can be found e.g.~in \cite[Theorem 6.1]{fg-06}. 

\begin{prb}[\textsc{Clique}] \label{prb:clique}%
\begin{tabular}{ll}
\textbf{Input:} &$n,k \in \N$, $E \subseteq \left({[n]} \atop 2 \right)$ \\
\textbf{Parameter:} & $k$ \\
\textbf{Question:}& Does $G=([n], E)$ contain a clique of size $k$?
\end{tabular}
\end{prb}

Moreover, it is shown in \cite{Chen2005216} that \textsc{Clique}  cannot be solved in time $n^{o(k)}$, unless the Exponential Time Hypothesis fails.

\paragraph{} In order to show the hardness result, we will first show how to construct a polytope $P$ for a graph $G=([n], E)$ with the property that 
$$\max\{\|x\|_p^p: x\in P\} = k \quad \Longleftrightarrow \quad G \text{ contains a clique of size } k.$$ 
This \enquote{reduction} will be laid out as if irrational numbers were computable with infinite precision. The second part of this section will then show that the numbers can be rounded to a sufficiently rough grid in order to make the reduction suitable for the Turing machine model.

\paragraph{The construction.\\} 
Let $(n, k, E)$ be an instance of \textsc{Clique} and $p \in [1, \infty)$. Throughout this paper, we assume without loss of generality that $n$ is an even number. (If not, we add an isolated vertex to the graph.)

We choose $d:=2k$ and consider 
$$\R^{2k}= \R^2 \times \R^2 \times \ldots \times \R^2$$

i.e.~we will think of a vector $x \in \R^{2k}$ as $k$ two-dimensional vectors stacked upon each other. Therefore, it will be convenient to use the following notation.

\begin{nota}\label{nota:halfspaceR2}%
By indexing a vector $x \in \R^{2k}$, we refer to the $k$ two-dimensional vectors $x_1,\dots, x_k \in \R^2$ such that $x= (x_1^T, \dots, x_k^T)^T$. 
Further, for $a \in \R^2$ and $\beta \in \R$, we let\index{$H_\leq^i(a, \beta)$}
$$H_\leq^i(a, \beta):= \{x \in \R^{2k}: a^T x_i \leq \beta\}.$$
\end{nota}

\paragraph{} In order to construct an $\CH$-presentation of a polytope $P \subseteq \B_p^2 \times \B_p^2 \times \dots \times \B_p^2$, we will first construct a 2-dimensional polytope $P_1 \subseteq\B_p^2$ as our basic building block by placing vertices on the unit sphere $\S_p^1$ (compare Figure~\ref{fig:P1}):

For $v\in [\frac{n}{2}]$, let 
\begin{equation}
p_v':= \left(1\atop 0\right) + \frac{2(v-1)}{n} \left({-1} \atop{1}\right)\quad \text{ and }\quad \{p_v\}:= \left(p_v' + [0,\infty)\left(1 \atop 1\right)\right) \cap \S_p^1;
\label{eq:defiPv}
\end{equation}
for $v  \in [n]\setminus [\frac{n}{2}]$ let 
\begin{equation}
p_v':= \left(0\atop 1\right) + \frac{2v-(n+2)}{n} \left({-1} \atop {-1}\right)\quad \text{ and }\quad \{p_v\}:= \left(p_v' + [0,\infty)\left(-1 \atop 1\right)\right)\cap \S_p^1.
\label{eq:defiPv2}
\end{equation}

For $v\in [2n]\setminus [n]$, let 
$$p_v:=- p_{v-n}$$ 

and
\begin{equation}
P_1:= \conv\{p_1, \dots, p_{2n}\} = \bigcap_{v\in [2n]} H_\leq(a_v, \beta_v) \subseteq \R^2.
\label{eq:P1}
\end{equation}
Note that $P_1$ is 0-symmetric by construction and that the required $\CH$-presentation of $P_1$ in \eqref{eq:P1} can be computed in time $O(n\log(n))$, see e.g.~\cite{CGdeBerg}. For notational convenience, we also define 
$$p_{2n+1}:= p_1 \quad \text{ and } \quad  p_{-1}:= p_{2n}.$$
\begin{figure}[htb]
\centering
\includegraphics[width=0.4\textwidth]{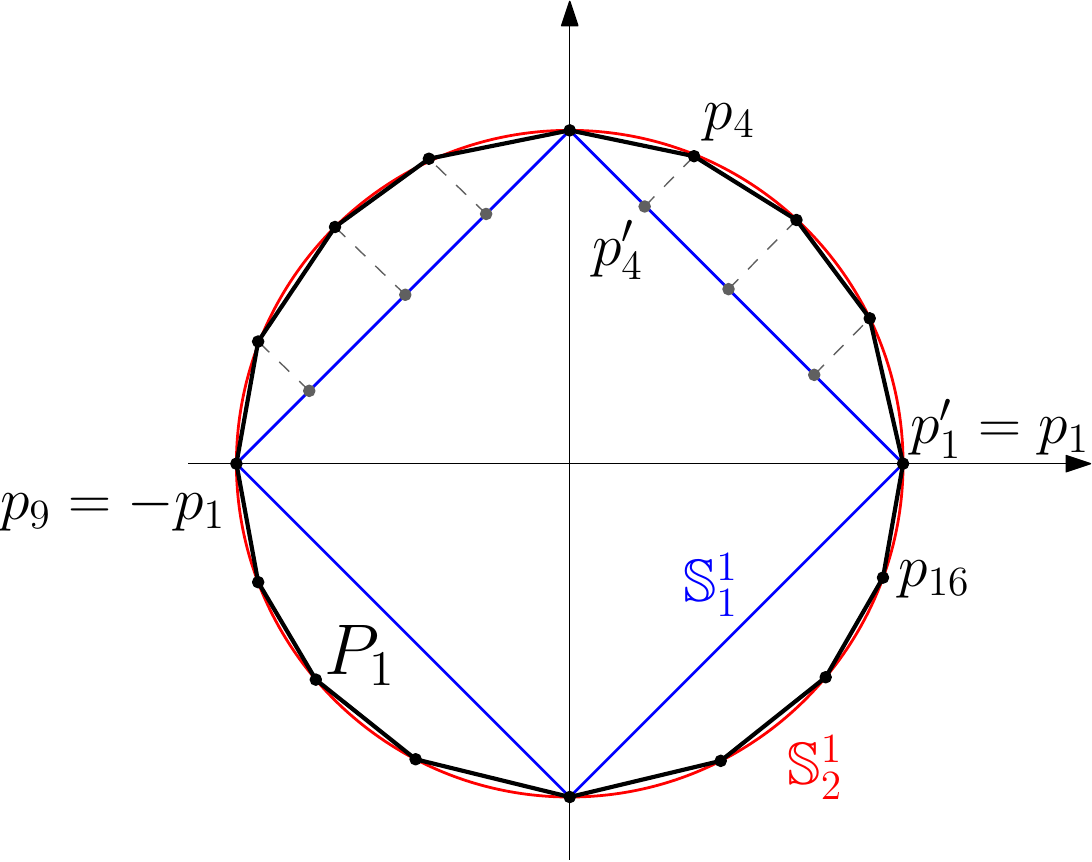}
\caption[Construction of $P_1$ (in the FPT reduction of \textsc{Clique}  to {\normmax})]{Construction of $P_1$ in the case $p= 2$, $n=8$.\label{fig:P1}}
\end{figure}

\begin{lem}[Distance between the $p_v$]
Let $P_1:= \conv\{p_1, \dots, p_{2n}\}$ be the polytope defined in Equation~\eqref{eq:P1} and $v\in[2n]$. The distance between two neighboring points on $\S_p^1$ satisfies
$$ \|p_v - p_{v+1}\|_2 \in \left[ \frac{2\sqrt{2}}{n}, \frac{4}{n}\right].$$
\label{lem:distances}
\end{lem}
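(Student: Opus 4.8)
The plan is to view the two families of points in \eqref{eq:defiPv} and \eqref{eq:defiPv2} as samples of two explicit arcs of $\S_p^1$ and to exploit an orthogonal decomposition of $p_v-p_{v+1}$. Consider first the indices $v\in[\frac n2]$. Each $p_v$ is obtained from the point $p_v'$ on the chord $\{x:x_1+x_2=1\}\cap\B_p^2$ by moving outward along the fixed direction $\binom{1}{1}$, so I write $p_v=p_v'+t_v\binom{1}{1}$ with $t_v\ge 0$ (this $t_v$ is well-defined and unique because the ray leaves the convex, bounded body $\B_p^2$ exactly once). Since $p_v'-p_{v+1}'=\frac2n\binom{1}{-1}$ is orthogonal to the projection direction $\binom{1}{1}$, the difference splits into orthogonal parts
$$ p_v-p_{v+1}=\tfrac 2n\binom{1}{-1}+(t_v-t_{v+1})\binom{1}{1}, $$
whence $\|p_v-p_{v+1}\|_2^2=\frac{8}{n^2}+2(t_v-t_{v+1})^2$. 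This identity already gives the lower bound $\|p_v-p_{v+1}\|_2\ge\frac{2\sqrt2}{n}$ for free, so the whole statement reduces to the estimate $|t_v-t_{v+1}|\le\frac2n$.

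To control $t_v-t_{v+1}$ I would regard $t$ as a function of the chord parameter $s\in[0,1]$, where the chord point is $(1-s,s)$ and $t(s)$ is determined implicitly by the sphere equation $h(s,t):=(1-s+t)^p+(s+t)^p=1$ (both arguments are nonnegative on this arc). Since $\partial_t h=p\big[(1-s+t)^{p-1}+(s+t)^{p-1}\big]>0$, the implicit function theorem makes $t(\cdot)$ a $C^1$ function and yields
$$ t'(s)=\frac{(1-s+t)^{p-1}-(s+t)^{p-1}}{(1-s+t)^{p-1}+(s+t)^{p-1}}, $$
which is of the form $\frac{A-B}{A+B}$ with $A,B\ge 0$, hence $|t'(s)|\le 1$ throughout (for $p=1$ the arc degenerates to the chord and $t\equiv 0$). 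As consecutive parameters differ by exactly $\Delta s=\frac2n$, the mean value theorem gives $|t_v-t_{v+1}|\le\frac2n$, so $\|p_v-p_{v+1}\|_2^2\le\frac8{n^2}+\frac8{n^2}=\frac{16}{n^2}$, i.e.\ the upper bound $\frac4n$.

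It remains to see that every neighboring pair among $p_1,\dots,p_{2n}$ reduces to a pair of the above type. The key point is that extending the parameter to $s=1$ makes the shared boundary point $\binom01=p_{n/2+1}$ the last point of the first arc, so $(p_{n/2},p_{n/2+1})$ is an ordinary consecutive pair with $\Delta s=\frac2n$ and is covered. The whole configuration is invariant under the rotation $R:(x,y)\mapsto(-y,x)$, which preserves $\S_p^1$ and Euclidean distances and satisfies $R^2=-I$; comparing parametrizations shows $R\,p_v=p_{v+n/2}$ (indices read cyclically via $p_{v+n}=-p_v$, consistent with $R^2=-I$). Consequently the $2n$ consecutive distances split into four rotated copies of the block $v\in\{1,\dots,\frac n2\}$, so all of them equal one of the distances already bounded; in particular the wrap-around pair $\|p_{2n}-p_1\|=\|{-p_n}-p_1\|=\|p_n-p_{n+1}\|$ is just the $R^3$-image of the pair $(p_{n/2},p_{n/2+1})$. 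I expect the only delicate part to be exactly this bookkeeping of the boundary and wrap-around pairs; once one observes that each boundary point is the $s=1$ continuation of the preceding arc and that $R$ permutes the consecutive pairs cyclically, all $2n$ pairs fall under the single orthogonal-decomposition estimate above.
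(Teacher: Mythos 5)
Your proof is correct, and it reaches the two bounds by a noticeably different route than the paper. For the lower bound the paper observes that the metric projection $\Pi$ onto $\B_1^2$ is contracting and that $\Pi(p_v)=p_v'$, so $\|p_v-p_{v+1}\|_2\geq\|p_v'-p_{v+1}'\|_2=\tfrac{2\sqrt{2}}{n}$; your Pythagorean identity $\|p_v-p_{v+1}\|_2^2=\tfrac{8}{n^2}+2(t_v-t_{v+1})^2$ rests on the same orthogonality of the chord step $\tfrac{2}{n}\binom{1}{-1}$ to the lifting direction $\binom{1}{1}$, just phrased without the projection. The genuine divergence is in the upper bound: the paper argues pictorially (for $v\leq \tfrac{n}{4}$, the remaining cases being \enquote{handled with the same arguments}) that coordinate monotonicity traps $p_{v+1}$ in a segment $[q_1,q_2]$ and reads off $\tfrac{4}{n}$ from a right triangle with equal legs, whereas you prove the Lipschitz estimate $|t'(s)|\leq 1$ for the lifting function via the implicit function theorem and conclude $|t_v-t_{v+1}|\leq\tfrac{2}{n}$ by the mean value theorem; the paper's segment containment encodes exactly this control of the increments $t_{v+1}-t_v$ (indeed a slightly stronger, one-sided version), so your estimate is the analytic counterpart of the same geometric fact. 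What your version buys is self-containedness and completeness: it needs no figure, the formula $t'=\frac{A-B}{A+B}$ makes the bound transparent for every $p\in[1,\infty)$, and your explicit rotation $R:(x,y)\mapsto(-y,x)$ with $Rp_v=p_{v+n/2}$ accounts cleanly for all $2n$ pairs, including the cross-arc pair $(p_{n/2},p_{n/2+1})$ (absorbed by extending the parameter to $s=1$, where $t(1)=0$) and the wrap-around pair, which the paper dispatches only implicitly. What the paper's version buys is brevity: two lines of elementary geometry replace the implicit-function-theorem machinery.
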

\begin{proof}
Let $\Pi: \R^2\rightarrow \B_1^2$ denote the projection onto $\B_1^2$. By the definitions in \eqref{eq:defiPv} and \eqref{eq:defiPv2}, we have $\Pi(p_v)= p_v'$. Since $\Pi$ is contracting, the equidistant placement of $p_1',\dots, p_n'$ yields $\|p_v - p_{v+1}\|_2\geq \|p_v'-p_{v+1}'\|_2 = \frac{2\sqrt{2}}{n}$ for all $v\in [2n]$.

For the other bound, assume that $v \leq \frac{n}{4}$. (The other cases can be handled with the same arguments.) By elementary properties of $\B_p^2$, we have $e_1^T p_{v+1}\leq e_1^T p_v$ and $\ones^T p_{v+1} \geq \ones^T p_v$ and thus $p_{v+1} \in [q_1, q_2]$ with $q_1, q_2$ defined as in Figure~\ref{fig:distances}.

\begin{figure}[h]
\centering
\includegraphics[width= 0.37\textwidth]{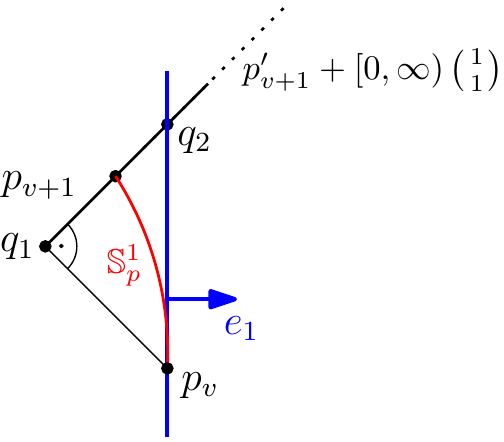}
\caption[Distances of points on unit sphere]{The situation in the proof of Lemma~\ref{lem:distances}.\label{fig:distances}}
\end{figure}

Inspection of the triangle $\conv \{p_v, q_1, q_2\}$ shows that it is equilateral with a right angle at $q_1$. Thus, $\|p_v -p_{v+1}\|_2\leq \|p_v -q_2\|_2 = \sqrt{2\|p_v-q_1\|_2^2}=\frac{4}{n}$.
\end{proof}

\paragraph{} Using Notation~\ref{nota:halfspaceR2}, we define a polytope $P_2 \subseteq \R^{2k}$ via
$$P_2:= \bigcap_{i\in [k]} \bigcap_{v \in [2n]} H_\leq^i(a_v, \beta_v) \subseteq \R^{2k}.$$

Observe that $P_2$ is 0-symmetric by construction and that any vertex $x$ of $P_2$ is of the form $x = (p_{v_1}, \dots, p_{v_k})^T$ for suitable $v_1,\dots, v_k \in [2n]$.

As for any $x= (x_1^T, \dots, x_k^T)^T \in \R^{2k}$ the identity 
$$\|x\|_p^p = \sum_{i=1}^k \|x_i\|_p^p $$
holds, and as for $p \in \N \setminus\{1\}$ the unit sphere $\{x \in \R^2: \|x\|_p^p=1\}$ contains no straight line segments, it follows that for $x \in P_2$,
$$ \|x\|_p^p \geq k \quad \Longleftrightarrow \quad x=\vector{p_{v_1}\\ \vdots\\ p_{v_k}} \text{ for some } v_1,\dots, v_k \in [2n] .$$

\paragraph{} For $v \in [2n]$, let $x_v, y_v \in \R$ be the coordinates of $p_v=(x_v,y_v)^T$ and define 
\begin{equation}
q_v:= \left({\mathrm{sgn}(x_v)|x_v|^{p-1}}\atop{\mathrm{sgn}(y_v)|y_v|^{p-1}}\right).
\label{eq:qU}
\end{equation}
 Noting that for all $x \in P_1$ and $v \in [2n]$, $q_v^T x = 1$ if and only if $x= p_v$, we define
\begin{equation}
\varepsilon:= 1 - \max \{ q_u^T p_v : u,v \in [2n], u\neq v\} > 0
\label{eq:epsilon}
\end{equation}
and for $u,v \in [n] $ and $i,j \in [k]$,
$$ E^{ij}_{uv}:= \{x \in\R^{2k}:\varepsilon - 2 \leq q_u^Tx_i + q_v^T x_j \leq 2- \varepsilon \} $$
and
$$ F^{ij}_{uv}:= \{x \in\R^{2k}:\varepsilon - 2 \leq q_u^Tx_i - q_v^T x_j \leq 2- \varepsilon \}.$$

Thus, if $x$ is a vertex of $P_2$ with $x_i = \pm p_u$ and $x_j= \pm p_v$ for some $u,v \in [n]$, then $x \not \in E^{ij}_{uv} \cap F^{ij}_{uv}$, i.e.~if $u,v\in [n]$ and $\{u,v\}\not \in E$ the constraints of $E^{ij}_{uv} \cap F^{ij}_{uv}$ make sure that $P$ does not contain a vertex with $x_i= \pm p_u$ and $x_j = \pm p_v$.

\paragraph{} Finally, to encode the \textsc{Clique} instance, we let $N:= \binom{[n]}{2}\setminus E$, define

$$P:= P_2 \cap \bigcap_{{{\{u,v\} \in N}} \atop {{i,j\in [k], i\neq j}}} (E^{ij}_{uv} \cap  F^{ij}_{uv})\cap \bigcap_{{v \in [n]}\atop {{i,j\in [k], i\neq j}}} (E^{ij}_{vv}\cap F^{ij}_{vv}),$$ 

and obtain the following lemma.

\begin{lem}[Reduction with infinite precision]
Let $(n, k, E)$ be an instance of \textsc{Clique}, $p \in [1, \infty)$ and $P \subseteq \R^{2k}$ the polytope obtained by the construction above. Then,
 $$\max\{\|x\|_p^p: x\in P\} = k \quad \Longleftrightarrow \quad G=([n], E) \text{ contains a clique of size } k.$$ 
\label{lem:infinitePrecision} 
\end{lem}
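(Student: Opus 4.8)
The plan is to prove both implications by exploiting the two structural facts already assembled in the construction. First, since $P\subseteq P_2$ and $P_2$ is the $k$-fold product $P_1\times\cdots\times P_1\subseteq(\B_p^2)^k$, we have $\|x\|_p^p\le k$ on $P$, and on $P_2$ the value $k$ is attained exactly at the vertices $(p_{v_1},\dots,p_{v_k})^T$ with $v_i\in[2n]$ (this uses that $\S_p^1$ carries no straight segments, so the only points of $P_1$ lying on the sphere are the $p_v$). Second, the slabs $E^{ij}_{uv}\cap F^{ij}_{uv}$ have the \enquote{forbidding} property stated above. Thus $\max\{\|x\|_p^p:x\in P\}\le k$ always, and the task reduces to deciding when the value $k$ is reached, i.e.\ when $P$ contains a vertex of $P_2$.

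Before treating the two directions I would upgrade the forbidding property to a biconditional for vertices. Writing each block of a $P_2$-vertex as $x_i=\pm p_{w_i}$ with $w_i\in[n]$, I would record that $q_u^Tp_w=1$ if $p_w=p_u$, equals $-1$ if $p_w=-p_u$, and otherwise satisfies $|q_u^Tp_w|\le 1-\varepsilon$; the last bound follows from the definition of $\varepsilon$ together with the antipodal symmetry $p_{w+n}=-p_w$. Consequently a $P_2$-vertex $x$ lies in $E^{ij}_{uv}\cap F^{ij}_{uv}$ if and only if it is \emph{not} the case that $x_i=\pm p_u$ and $x_j=\pm p_v$ simultaneously; since the chosen representatives satisfy $w_i\in[n]$, the slab is violated exactly when $w_i=u$ and $w_j=v$. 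The earlier discussion supplies only the \enquote{violated} half of this equivalence, and proving the complementary estimate (a non-matching vertex stays inside the slab, because then one factor is bounded by $1-\varepsilon$ and the other by $1$, so the sum and difference stay within $2-\varepsilon$ in absolute value) is the technical heart of the argument.

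For the implication from a clique to maximum value $k$, I take a clique $\{w_1,\dots,w_k\}\subseteq[n]$ and form the candidate $x=(p_{w_1},\dots,p_{w_k})^T$ with positive representatives, so that $x_i=\pm p_u$ forces $w_i=u$. Then $\|x\|_p^p=k$, and it remains to check $x\in P$. By the biconditional, a self-loop slab $E^{ij}_{vv}\cap F^{ij}_{vv}$ can be violated only if $w_i=w_j=v$, which is impossible since clique vertices are distinct; and a non-edge slab $E^{ij}_{uv}\cap F^{ij}_{uv}$ with $\{u,v\}\in N$ can be violated only if $\{w_i,w_j\}=\{u,v\}$, impossible since every pair of clique vertices is an edge whereas $\{u,v\}$ is a non-edge. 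Hence $x\in P$ and the maximum equals $k$.

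For the converse, suppose $\max\{\|x\|_p^p:x\in P\}=k$ and pick a maximizer $x^\ast\in P$ (one exists since $P$ is compact). As $x^\ast\in P_2$ attains the value $k$, it is a vertex of $P_2$, so $x^\ast=(p_{v_1},\dots,p_{v_k})^T$ and I may take $w_i\in[n]$ with $x_i^\ast=\pm p_{w_i}$. The self-loop slabs, which $x^\ast$ satisfies, forbid $w_i=w_j$ for $i\neq j$, so the $w_i$ are $k$ distinct vertices; and for any $i\neq j$ the non-edge slab for $\{w_i,w_j\}$ would be violated were $\{w_i,w_j\}$ a non-edge, contradicting $x^\ast\in P$, so every pair $\{w_i,w_j\}$ is an edge. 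Thus $\{w_1,\dots,w_k\}$ is a $k$-clique. The only step demanding genuine care is the uniform gap estimate $|q_u^Tp_w|\le 1-\varepsilon$ underlying the biconditional, which is exactly where the strict convexity of $\B_p^2$ (equivalently $\varepsilon>0$) is used.
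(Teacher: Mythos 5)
Your proof is correct and takes essentially the same route as the paper, whose own ``proof'' of this lemma is just the construction discussion: the product structure plus strict convexity of $\B_p^2$ identifies the norm-$k$ points of $P_2$ with the vertices $(p_{v_1}^T,\dots,p_{v_k}^T)^T$, and the slabs $E^{ij}_{uv}\cap F^{ij}_{uv}$ forbid exactly the repeated-vertex and non-edge assignments. Your upgrade of the forbidding property to a biconditional (a non-matching vertex satisfies every slab because one factor is bounded by $1-\varepsilon$ in absolute value and the other by $1$, via the antipodal symmetry and H\"older) is precisely the half of the argument the paper leaves implicit, and it is argued correctly.
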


\paragraph{Analysis of the constructed polytope.\\} 
We will now investigate how much we can perturb the (possibly irrational) polytope $P$ in order to make it suitable for an FPT-reduction without loosing its ability to decide between Yes- and No-instances of \textsc{Clique}. 
For this purpose, we define the constant
\begin{equation}
 U:= \frac{1}{n^{2p}k^2}. 
\end{equation}

In the following, we show that rounding the vertices $p_1,\dots, p_{2n}$ of our initial polytope $P_1\subseteq \R^2$ to the grid $\frac{U}{2} \Z^2$ preserves all important features of our reduction. 
Since the parameter $p$ is a constant in \normmax[p], all the necessary computations can be carried out with a precision of $O(\log(nk))$ bits. Since we only need a polynomial number of computations, the whole reduction can be carried out in polynomial time.

\begin{lem}
Let $P_1= \conv\{p_1,\dots, p_{2n}\}\subseteq \R^2$ with $p_1,\dots, p_{2n} \in \S_p^1$ be the polytope from Equation \eqref{eq:P1}. For $\varepsilon:= 1 - \max \{ q_u^T p_v : u,v \in [2n], u\neq v\}$ with $q_u$ defined as in Equation \eqref{eq:qU}, we have
$$\varepsilon \geq \frac{2^{p-1}}{pn^p}.$$
\label{lem:epsilon}
\end{lem}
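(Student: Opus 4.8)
The plan is to rewrite the quantity $1 - q_u^T p_v$ as a sum of one–dimensional Bregman divergences of the convex function $f(t) = |t|^p$ and to bound this sum from below using the minimal distance between the $p_v$ supplied by Lemma~\ref{lem:distances}. Writing $p_u = (a_1, a_2)^T$ and $p_v = (b_1, b_2)^T$ and using $\|p_u\|_p^p = 1$ together with the definition of $q_u$ in~\eqref{eq:qU}, a direct computation gives the identity
\[
1 - q_u^T p_v = \frac{1}{p} \sum_{i=1}^{2} D_f(b_i, a_i), \qquad D_f(y,x) := |y|^p - |x|^p - p\,\sgn(x)|x|^{p-1}(y-x),
\]
where each summand satisfies $D_f(b_i, a_i) \geq 0$ by convexity of $f$. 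This reproduces the support inequality $q_u^T p_v \leq 1$ and reduces the lemma to a lower bound on the largest summand.

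First I would reduce to neighbouring points. Since $p_1, \dots, p_{2n}$ are the vertices of $P_1$ in convex cyclic position and, as already noted, $q_u^T x = 1$ for $x \in P_1$ holds exactly at $x = p_u$, the linear functional $q_u^T(\cdot)$ is unimodal along the boundary cycle of $P_1$; hence $\max_{v \neq u} q_u^T p_v$ is attained at a neighbour $p_{u\pm 1}$. It therefore suffices to bound $1 - q_u^T p_v$ from below for $v = u \pm 1$, i.e.\ for pairs with $\|p_u - p_v\|_2 \in [\tfrac{2\sqrt2}{n}, \tfrac4n]$ by Lemma~\ref{lem:distances}.

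The core is a one–dimensional estimate: for $p \in \N$ with $p \geq 2$ and $x, y \in [-1,1]$ that are of the same sign or satisfy $xy = 0$, one has $D_f(y,x) \geq \tfrac12 |y-x|^p$. I expect this to be the main obstacle. Writing the gap $t = |y-x|$ and viewing $D_f$ as a function of the anchor, I would show that for nonnegative arguments $D_f$ is monotone increasing in $|x|$ as a consequence of the convexity of $s \mapsto s^{p-1}$ — this is exactly where $p \geq 2$ enters — so that it suffices to evaluate at the smallest admissible anchor. This yields $D_f(y,x) \geq t^p$ when the anchor may be taken to be $0$ and $D_f(y,x) \geq (p-1)t^p$ in the remaining (decreasing) case, both of which dominate $\tfrac12 t^p$; the cases $xy=0$ are immediate from $D_f(y,0) = |y|^p$ and $D_f(0,x) = (p-1)|x|^p$. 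The restriction to same–sign / one–zero pairs is what the neighbour reduction buys us: each coordinate of a point of $\S_p^1$ changes sign only at one of $\pm e_1, \pm e_2$, and these four points are themselves among the vertices $p_v$, so no pair of neighbours has a coordinate of strictly opposite nonzero signs.

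Finally I would assemble the bound. Choosing the coordinate $i^\ast$ with $|a_{i^\ast} - b_{i^\ast}| = \|p_u - p_v\|_\infty$ and using $\|p_u - p_v\|_\infty \geq \tfrac{1}{\sqrt2}\|p_u - p_v\|_2 \geq \tfrac{1}{\sqrt2}\cdot\tfrac{2\sqrt2}{n} = \tfrac2n$, the identity together with the one–dimensional estimate gives
\[
1 - q_u^T p_v \;\geq\; \frac1p\, D_f(b_{i^\ast}, a_{i^\ast}) \;\geq\; \frac{1}{2p}\,\|p_u - p_v\|_\infty^p \;\geq\; \frac{1}{2p}\left(\frac2n\right)^p = \frac{2^{p-1}}{pn^p},
\]
which is the claimed bound. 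The two places demanding care are the one–dimensional inequality above (the decreasing case in particular, where the monotonicity-in-the-anchor argument must be set up correctly) and the verification that the coordinates change sign only at the grid points $\pm e_1, \pm e_2$, so that the clean estimate may be applied to every neighbouring pair.
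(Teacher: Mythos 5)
Your proof is correct, but it follows a genuinely different route from the paper. The paper argues geometrically: it writes $q_u^T p_v = 1 + \cos(q_u,\,p_v-p_u)\,\|q_u\|_2\,\|p_v-p_u\|_2$, bounds the cosine by the worst case at the ``flattest'' point $e_1$ of $\S_p^1$ (using that the points of lowest curvature are $\pm e_1,\pm e_2$), and controls that worst case by the estimate $x_1 \le 1-\tfrac{2^p}{pn^p}$ for points of $\S_p^1$ at distance $\ge \tfrac{2\sqrt2}{n}$ from $e_1$, obtained by linearizing $t\mapsto t^{1/p}$ at $1$; this treats all pairs $u\neq v$ uniformly, with no reduction to neighbours. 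You instead replace the curvature/angle reasoning by the exact Bregman identity $1-q_u^Tp_v=\tfrac1p\sum_i D_f(b_i,a_i)$ and purely one-dimensional convexity estimates (anchor-monotonicity of $D_f$ via convexity of $s\mapsto s^{p-1}$), at the price of two extra structural steps: the reduction to neighbouring pairs via cyclic unimodality of a linear functional on the vertices of a convex polygon, and the observation that $\pm e_1,\pm e_2$ occur among the $p_v$, so neighbours never have coordinates of strictly opposite sign. That reduction is not cosmetic in your argument: the one-dimensional bound $D_f(y,x)\ge\tfrac12|y-x|^p$ genuinely fails for opposite-sign pairs once $p\ge 5$ (e.g.\ $x=-y=t$ gives $D_f=2pt^p$ versus $\tfrac12(2t)^p$), so your care on this point is warranted, whereas the paper's global argument needs no such case distinction. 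What you gain is rigor and elementarity --- the paper's appeal to ``lowest curvature'' and to bounding $\cos(q_u,p_v-p_u)$ by $\cos(e_1,p_2-e_1)$ is left informal, while every step of your argument is a checkable convexity inequality --- and both routes land on the same constant $\tfrac{2^{p-1}}{pn^p}$ (yours in fact gives the slightly better $\tfrac{2^p}{pn^p}$ if you keep $D_f\ge t^p$ instead of $\tfrac12 t^p$).
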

\begin{proof}
Let $x:=(x_1,x_2)^T\in \S^1_p$  and $y:=(y_1,y_2)^T \in \S_1^1$ with $x,y \geq 0$, $\|x-e_1\|_2 \geq \frac{2\sqrt{2}}{n}$, and $\|y-e_1\|_2 \geq \frac{2\sqrt{2}}{n}$. Since $\B_1^2 \subseteq \B_p^2$, $x_2 \geq y_2 \geq \frac2n$. Combining this inequality with $x \in \S_p^1$ yields
\begin{equation}
x_1 = (1-x_2^p)^{\frac{1}{p}}\leq \left(1-\left(\frac{2}{n}\right)^p\right)^{\frac{1}{p}} \leq 1- \frac{2^p}{pn^p},
\label{eq:x1}
\end{equation}
where the last inequality follows by bounding the concave function $x \mapsto x^{\frac{1}{p}}$ from above by a linear approximation at $x=1$.

Now, let $u,v \in [2n]$ with $u\neq v$. Then,
\begin{equation}
q_u^T p_v = q_u^T p_u + q_u^T (p_v - p_u) = 1 + \cos(q_u, p_v- p_u) \, \|q_u\|_2\,  \|p_v - p_u\|_2.
\label{eq:part1}
\end{equation} 

Since the points of lowest curvature on $\S_p^1$ are $\pm e_1$ and $\pm e_2$, and since $e_1= p_1 = q_1$, we obtain $\cos(q_u, p_v- p_u) \leq \cos(e_1, p_2 - e_1)$, which in turn can be bounded by 
$$\cos(e_1, p_2 - e_1) \leq \frac{x_1-1}{\|p_2-e_1\|_2}$$ with $x_1= e_1^T x$ for the point $x\in \S_1^p$ defined above.  
Further, $q_u \in \S_{\frac{p}{p-1}}^1$ implies $\|q_u\|_2 \geq \frac{\sqrt{2}}{2}$, and $ \|p_v - p_u\|_2 \geq \frac{2\sqrt{2}}{n}$ by Lemma~\ref{lem:distances}.
Using \eqref{eq:x1}, we can continue Equation \eqref{eq:part1} to
$$q_u^T p_v \leq 1 - \frac{2^p}{pn^p \|p_2 - e_1\|}\cdot \frac{\sqrt{2}}{2}\cdot \frac{2\sqrt{2}}{n} \leq 1 - \frac{2^{p-1}}{pn^p}, $$
where the last inequality follows again from Lemma~\ref{lem:distances}.
\end{proof}

\paragraph{} For $v \in [n]$, let $\bar p_v$ be the rounding of $p_v$ to the grid $\frac{U}{2} \Z^2$ and define $\bar p_v = -\bar p_{v-n}$ for $v\in [2n]\setminus[n]$ and further 
\begin{equation}
\bar P_1:= \conv\{\bar p_1,\dots, \bar p_{2n}\}.
\label{eq:P1bar}
\end{equation} For $\bar p_v = (\bar x_v, \bar y_v)^T \in \R^2$, define 
$$\bar q_v:= \left({\mathrm{sgn}(\bar x_v) \, |\bar x_v|^{p-1}}\atop {\mathrm{sgn}(\bar{y}_v)\,|\bar y_v|^{p-1}}\right).$$

By choice of our grid, we get 
\begin{equation}
\|p_v -\bar p_v\|_{p'} \leq  U \quad \forall p' \geq 1.
\label{eq:normDifference}
\end{equation} Moreover, if $q \in [1, \infty)$ is such that $\frac{1}{p}+\frac{1}{q} =1$, then $\|q_v\|_q =1$ for all $v\in [2n]$ and since $x \mapsto x^{p-1}$ is Lipschitz continuous on $[-1,1]$ with Lipschitz constant $L= p-1$, we obtain 
\begin{equation}
\|q_v - \bar q_v \|_1 \leq (p-1) U.
\label{eq:normalsDifference}
\end{equation}

\paragraph{} First, we show that the points $\bar p_1,\dots, \bar p_{2n}$ are still in convex position, which is binned into a separate lemma.
\begin{lem}
Let $\bar P_1 = \conv\{\bar p_1,\dots, \bar p_{2n}\}\subseteq \R^2$ the polytope from \eqref{eq:P1bar}. Then, $\ext(\bar P_1)= \{\bar p_1, \dots, \bar p_{2n}\}$ and the coding length of an $\CH$-presentation of $\bar P_1$ is polynomially bounded in the coding length of $\bar p_1,\dots, \bar p_{2n}$.
\label{lem:barP1}
\end{lem}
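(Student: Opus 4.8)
The plan is to prove the two assertions separately, using throughout that the rounding error $U=\frac{1}{n^{2p}k^2}$ is negligible compared with the separation $\varepsilon$ between the supporting functionals $q_v$ guaranteed by Lemma~\ref{lem:epsilon}.

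For the claim $\ext(\bar P_1)=\{\bar p_1,\dots,\bar p_{2n}\}$, I would exhibit for each $v$ a linear functional that is uniquely maximized over $\{\bar p_1,\dots,\bar p_{2n}\}$ at $\bar p_v$; the natural candidate is the exact dual vector $q_v$ from \eqref{eq:qU}, which satisfies $q_v^Tp_v=1$, $\|q_v\|_q=1$, and $q_v^Tp_u\leq 1-\varepsilon$ for all $u\neq v$ by the definition of $\varepsilon$ in \eqref{eq:epsilon}. Using Hölder's inequality together with \eqref{eq:normDifference} (with $p'=p$), I would bound $q_v^T\bar p_v \geq q_v^Tp_v-\|q_v\|_q\,\|p_v-\bar p_v\|_p \geq 1-U$ and, for $u \neq v$, $q_v^T\bar p_u \leq q_v^Tp_u + \|q_v\|_q\,\|p_u-\bar p_u\|_p \leq 1-\varepsilon+U$. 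Subtracting gives $q_v^T\bar p_v - q_v^T\bar p_u \geq \varepsilon-2U$. It then remains to check that $\varepsilon-2U>0$, which is a direct numerical comparison of the bound $\varepsilon\geq\frac{2^{p-1}}{pn^p}$ from Lemma~\ref{lem:epsilon} with $2U=\frac{2}{n^{2p}k^2}$ (it holds since $2^{p-2}n^pk^2>p$ for $p\geq 2$, $n\geq 2$). Hence $\bar p_v$ is the unique maximizer of $q_v$ over the point set and is therefore an extreme point of $\bar P_1$; as this holds for every $v\in[2n]$, all $2n$ rounded points are extreme.

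For the coding-length bound, I would first note that, by the same estimate, the cyclic order of the $\bar p_v$ around $\bd \bar P_1$ agrees with that of the $p_v$: the consecutive Euclidean gaps $\|p_v-p_{v+1}\|_2\geq\frac{2\sqrt2}{n}$ from Lemma~\ref{lem:distances} dominate the displacement $U$, so rounding cannot reorder the points. Thus $\bar P_1$ is a $2n$-gon whose facets are the edges $[\bar p_v,\bar p_{v+1}]$, and $\bar P_1=\bigcap_{v\in[2n]}H_\leq(\bar a_v,\bar\beta_v)$ where $\bar a_v$ is the outer normal of the line through $\bar p_v$ and $\bar p_{v+1}$ and $\bar\beta_v=\bar a_v^T\bar p_v$. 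Each such inequality is obtained from the two grid points $\bar p_v,\bar p_{v+1}$ by the elementary $2\times2$ determinant formula, i.e.\ by a constant number of additions and multiplications of rationals lying in $\frac{U}{2}\Z^2\cap[-2,2]^2$, which have coding length $O(\log(nk))$ since $p$ is constant. Consequently each $(\bar a_v,\bar\beta_v)$ has coding length $O(\log(nk))$ and the whole $\CH$-presentation has coding length $O(n\log(nk))$, polynomial in the coding length of the $\bar p_v$.

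The only genuinely delicate point is the first part, and specifically the quantitative step $\varepsilon-2U>0$: everything hinges on having chosen the grid fine enough that the rounding error cannot cross the gap $\varepsilon$ separating each $p_v$ from the supporting hyperplane $\{x:\ q_v^Tx=1\}$ at the neighbouring vertices. The precise constant in $U$ is what makes this work, and matching it against Lemma~\ref{lem:epsilon} is the crux; the coding-length statement is then routine bookkeeping of rational arithmetic on grid points.
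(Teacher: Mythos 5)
Your proposal is correct and takes essentially the same approach as the paper: you certify each $\bar p_v$ as an extreme point via the separating functional $q_v$ with the bounds $q_v^T\bar p_v\geq 1-U$ and $q_v^T\bar p_u\leq 1-\varepsilon+U$ (the paper gets the first bound via Cauchy--Schwarz and $\|q_v\|_2\leq\|q_v\|_q$ where you invoke H\"older directly, which is equivalent), and you bound the coding length by writing each facet inequality as a determinant formula in two of the rounded points, which is exactly the paper's Cramer's-rule argument. Your explicit numerical check that $\varepsilon-2U>0$ and the remark on preservation of the cyclic order are harmless additions that the paper leaves implicit.
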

\begin{proof}
 For $v \in [2n]$, we have $q_v^T \bar p_{v} \geq 1- \|q_v\|_2 U \geq 1- \|q_v\|_q U = 1-U$, since $p\geq 2$ and therefore $q \leq 2$. For $u \in [2n]\setminus \{v\}$, we get $q_v^T \bar p_{u} \leq 1-\varepsilon + U$. Since $1-\varepsilon + U < 1-U$, the hyperplane $H_=(q_v, 1- \varepsilon +U)$ separates $\bar p_v$ from $\conv (\{\bar p_1,\dots, \bar p_{2n}\}\setminus \{\bar p_v\})$ and hence $\bar p_v \in \ext(\bar P_1)$.

 Assume now that $\bar P_1:= \{x \in \R^2: \bar a_v^T x \leq 1 ~\forall v \in [2n]\}$ is an $\CH$-presentation of $\bar P_1$. Applying Cramer's Rule, we see that, for all $v \in [2n]$, the entries of $\bar a_v$ are quotients of polynomials in $\bar p_1,\dots, \bar p_{2n}$ and so the coding length of the $\CH$-presentation of $\bar P_1$ is bounded by a polynomial in the coding length of $\bar p_1,\dots, \bar p_{2n}$.
\end{proof}

\paragraph{} Since the coding length of $\bar P_1$ is polynomially bounded, we also get that the coding length of 
$$\bar P_2:= \bigcap_{i\in [k]} \bigcap_{v\in [2n]} H_\leq^i (\bar a_v, \bar \beta_v) \subseteq \R^{2k}. $$
is polynomially bounded.

\paragraph{} Now, let $\bar \varepsilon := 1 - \max \{\bar q_u^T \bar p_v: u,v \in [2n], u\neq v\}$. By expanding the expression $\bar q_u^T \bar p_v =\bigl(q_u + (\bar q_u -q_u)\bigr)^T \bigl(p_v +(\bar p_v - p_v)\bigr)$ and using \eqref{eq:normDifference} and \eqref{eq:normalsDifference}, we obtain 
\begin{equation}
\bar \varepsilon \geq \varepsilon - 3pU> 0.
\label{eq:barvarepsilon}
\end{equation}

Finally, define $$ \bar E^{ij}_{uv}:= \{x \in\R^{2k}:\bar \varepsilon - 2 \leq \bar p_u^Tx_i + \bar p_v^T x_j \leq 2- \bar \varepsilon \}, $$
and $$ \bar F^{ij}_{uv}:= \{x \in\R^{2k}:\bar \varepsilon - 2 \leq \bar p_u^Tx_i -, \bar p_v^T x_j \leq 2- \bar \varepsilon \}, $$
and, for $N:= \binom{[n]}{2}\setminus E$, let

\begin{equation}
\bar P:= \bar P_2 \cap \bigcap_{{{\{u,v\} \in N }} \atop {{i,j\in [k], i\neq j}}} (\bar E^{ij}_{uv} \cap  \bar F^{ij}_{uv})\cap \bigcap_{{v \in [n]}\atop {{i,j\in [k], i\neq j }}} (\bar E^{ij}_{vv}\cap \bar F^{ij}_{vv}).
\label{eq:barP}
\end{equation}

\paragraph{} The following two lemmas will now prepare the proof that we can still reduce \textsc{Clique} to norm maximization over $\bar P$. To be able to state them in a concise way, we introduce the following notation.

\begin{nota}
Let $\bar P\subseteq \R^{2k}$ be the polytope from Equation \eqref{eq:barP} and $x = (x_1^T, \dots, x_k^T)^T\in \bar P$. 
By letting\index{$m_i(x)$}
$$m_i(x) \in \argmax \{\bar q_v^T x_i: v \in [2n]\}, $$
we can refer to the index of a vertex which is \enquote{closest} to $x$ in the sense that $\bar q_{m_i(x)}^T x \geq \bar q_v^T x$ for all $v \in [2n]$.  This is illustrated in Figure~\ref{fig:mI}.
\begin{figure}[htb]
\centering
\includegraphics[width=0.7\textwidth]{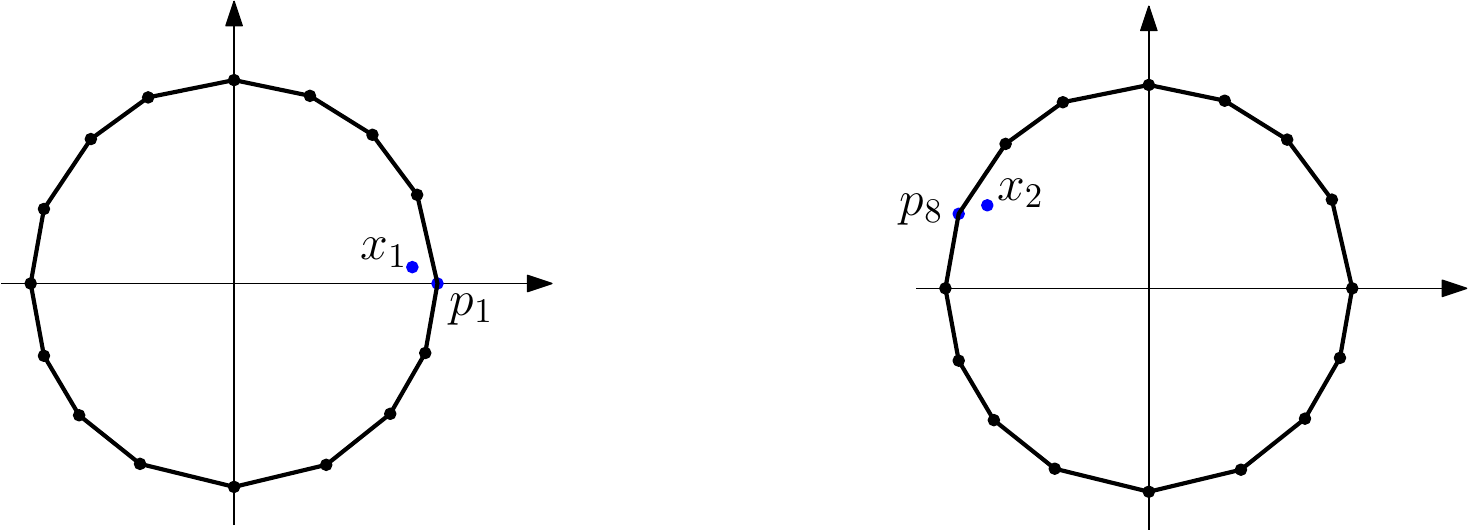}
\caption[Illustration of Notation~\ref{nota:mI}]{Illustration of Notation~\ref{nota:mI}. The figure shows a point $x= (x_1^T, x_2^T)^T\in \R^4$ with $m_1(x)= 1$ and $m_2(x)=8$.\label{fig:mI}}
\end{figure}
\label{nota:mI}
\end{nota}

\paragraph{} First, we show that if $\bar P$ contains a point which is \enquote{close} (in the sense specified in Notation~\ref{nota:mI}) to a clique vertex, then $\bar P$ contains the clique vertex itself.
\begin{lem}
Let $\bar P\subseteq \R^{2k}$ be the polytope constructed above in Equation \eqref{eq:barP}. If there exists $\bar x \in \bar P$ such that 
$ \bar q_{m_i(\bar x)}^T \bar x > 1 -\frac{\bar \varepsilon}{2}$ for all $i \in [k],$ then $(\bar p_{m_1(\bar x)}^T, \dots, \bar p_{m_k(\bar x)}^T)^T \in \bar P$.
\label{lem:notEverywhereClose}
\end{lem}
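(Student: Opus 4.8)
The plan is to show directly that the ``snapped'' point $z:=(\bar p_{m_1(\bar x)}^T,\dots,\bar p_{m_k(\bar x)}^T)^T$ satisfies every defining inequality of $\bar P$. Abbreviate $m_i:=m_i(\bar x)$ and write each index in reduced form $\bar p_{m_i}=\sigma_i\bar p_{u_i}$ with $u_i\in[n]$ and $\sigma_i\in\{+1,-1\}$, so that $\bar q_{m_i}=\sigma_i\bar q_{u_i}$ (using that $v\mapsto\bar q_v$ is odd under $v\mapsto v+n$, since $\bar p_{v+n}=-\bar p_v$); recall that the forbidding constraints $\bar E^{ij}_{uv},\bar F^{ij}_{uv}$ are exactly the ones confining $\bar q_u^Tx_i\pm\bar q_v^Tx_j$ to the interval $[\bar\varepsilon-2,\,2-\bar\varepsilon]$. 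Membership $z\in\bar P_2$ is immediate: by Lemma~\ref{lem:barP1} each $\bar p_{m_i}\in\ext(\bar P_1)\subseteq\bar P_1$, and $x\in\bar P_2$ merely asks that every block $x_i$ lie in $\bar P_1$. It therefore remains to verify the constraints $\bar E^{ij}_{uv}\cap\bar F^{ij}_{uv}$ for the forbidden pairs, i.e.~for $\{u,v\}\in N$ and for $u=v$.

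First I would show that the reduced indices $u_1,\dots,u_k$ form a $k$-clique. Fix $i\neq j$ and suppose, for contradiction, that $\{u_i,u_j\}$ is a forbidden pair (a non-edge of $G$ or $u_i=u_j$). Then the constraint $\bar E^{ij}_{u_iu_j}\cap\bar F^{ij}_{u_iu_j}$ belongs to $\bar P$, so $\bar x$ satisfies it. But the hypothesis gives $\sigma_i\bar q_{u_i}^T\bar x_i=\bar q_{m_i}^T\bar x_i>1-\tfrac{\bar\varepsilon}{2}$ and likewise $\sigma_j\bar q_{u_j}^T\bar x_j>1-\tfrac{\bar\varepsilon}{2}$, whence
$$\sigma_i\bar q_{u_i}^T\bar x_i+\sigma_j\bar q_{u_j}^T\bar x_j>2-\bar\varepsilon.$$
A short case distinction on the two signs $(\sigma_i,\sigma_j)$ turns this into a violation of one of the four bounds defining $\bar E^{ij}_{u_iu_j}\cap\bar F^{ij}_{u_iu_j}$ (equal signs violate a bound of $\bar E$, opposite signs one of $\bar F$), contradicting $\bar x\in\bar P$. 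Hence $\{u_i,u_j\}$ is an edge and $u_i\neq u_j$ for all $i\neq j$.

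Finally I would verify that $z$ lies in every forbidding constraint. Fix a forbidden pair $\{u,v\}$ and indices $i\neq j$; the task is to bound $\bar q_u^T\bar p_{m_i}\pm\bar q_v^T\bar p_{m_j}$. By the clique property just established, $u_i=u$ and $u_j=v$ cannot hold simultaneously, since $\{u,v\}$ is forbidden while $\{u_i,u_j\}$ is an edge; so at least one reduced index, say $u_i$, differs from its partner. For that ``mismatched'' block the definition of $\bar\varepsilon$ gives $|\bar q_u^T\bar p_{m_i}|\le 1-\bar\varepsilon$, while for the other block the estimate $|\bar q_v^T\bar p_{m_j}|\le\|\bar p_v\|_p^p\le 1$ applies. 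Adding these, both $\bar q_u^T\bar p_{m_i}+\bar q_v^T\bar p_{m_j}$ and its $\mp$ counterpart lie in $[\bar\varepsilon-2,\,2-\bar\varepsilon]$, i.e.~$z\in\bar E^{ij}_{uv}\cap\bar F^{ij}_{uv}$. Together with $z\in\bar P_2$ this yields $z\in\bar P$.

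The main obstacle is the constant bookkeeping in this last step: one must ensure that the slack $\bar\varepsilon$ gained on the mismatched term genuinely dominates the perturbation incurred on the matched term $\bar q_v^T\bar p_v=\|\bar p_v\|_p^p$ by rounding to the grid $\tfrac{U}{2}\Z^2$. This is exactly where the estimates prepared earlier enter — the vertex displacement $\|p_v-\bar p_v\|_{p'}\le U$ of \eqref{eq:normDifference}, the normal displacement \eqref{eq:normalsDifference}, the lower bound $\varepsilon\ge 2^{p-1}/(pn^p)$ of Lemma~\ref{lem:epsilon}, and the resulting $\bar\varepsilon\ge\varepsilon-3pU>0$ of \eqref{eq:barvarepsilon}. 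The key point to nail down is that the matched inner product never exceeds the budget, which holds because the $p_v$ (and hence the rounded $\bar p_v$) can be kept in $\B_p^2$, so that $\|\bar p_v\|_p^p\le 1$; the particular choice $U=1/(n^{2p}k^2)$, making $pU$ negligible against $\bar\varepsilon$, then guarantees that the thresholds $\pm(2-\bar\varepsilon)$ are respected with room to spare.
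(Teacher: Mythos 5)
Your proof is correct and follows essentially the same route as the paper's: the hypothesis rules out every forbidding constraint indexed by the closest vertices (your clique step), and the definition of $\bar\varepsilon$ then bounds the mismatched inner product at the snapped point, which is exactly the paper's two-step argument. You are merely more explicit on two points the paper glosses over --- the sign bookkeeping between indices in $[2n]$ and reduced indices in $[n]$, and the fact that the rounding must keep $\|\bar p_v\|_p \leq 1$ so that the matched term $\bar q_{u_j}^T \bar p_{u_j} = \|\bar p_{u_j}\|_p^p$ does not exceed the budget $2-\bar\varepsilon$ --- and both are handled correctly.
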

\begin{proof}
Since $\bar q_{m_i(\bar x)}^T \bar x_i > 1-\frac{\bar \varepsilon}{2}$ for all $i \in [k]$, for no pair $(i,j) \in [k]^2$ the inequalities $\bar q_{m_i(\bar x)}^T x_i + \bar q_{m_j(\bar x)}^T x_j \leq 2-\bar \varepsilon$ can be present in the description of $\bar P$. Since, by definition of $\bar \varepsilon$, we have $\bar q_v^T \bar p_{m_i(\bar x)} \leq 1-\bar \varepsilon$ for all $v \in [2n]\setminus\{m_i(\bar x)\}$ and $i\in [k]$, we can conclude that
$$(\bar p_{m_1(\bar x)}^T,\dots, \bar p_{m_k(\bar x)}^T)^T \in \bar P.$$ 
\end{proof}

\paragraph{} In view of Lemma~\ref{lem:notEverywhereClose}, it remains to show that the norm of a vertex which is \enquote{far} from a clique vertex is sufficiently small: 

\begin{lem}
Let $v \in [2n]$ and $Q:= \conv\{0, \bar p_v, \bar p_{v+1}\} \cap H_\leq (\bar q_v, 1-\frac{\bar \varepsilon}{2}) \cap H_\leq (\bar q_{v+1}, 1-\frac{\bar \varepsilon}{2})$. Then, for $n$ sufficiently large, 
$$\max \{\|x\|_p^p: x \in Q \} \leq 1 - \frac{2^{p-3}}{pn^p}.$$
\label{lem:far}
\end{lem}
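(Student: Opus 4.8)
The plan is to reduce everything to the single pointwise inequality
\[
\|x\|_p^p\le\max\{\bar q_v^Tx,\bar q_{v+1}^Tx\}\qquad\text{for all }x\in\conv\{0,\bar p_v,\bar p_{v+1}\},
\]
which I call $(\star)$. Granting $(\star)$, the lemma is almost immediate: every $x\in Q$ lies in the triangle and, by the two defining halfspaces of $Q$, satisfies $\bar q_v^Tx\le 1-\tfrac{\bar\varepsilon}{2}$ and $\bar q_{v+1}^Tx\le 1-\tfrac{\bar\varepsilon}{2}$, so $(\star)$ gives $\|x\|_p^p\le 1-\tfrac{\bar\varepsilon}{2}$. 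It then remains to insert the size of $\bar\varepsilon$: by Lemma~\ref{lem:epsilon} and Equation~\eqref{eq:barvarepsilon} we have $\tfrac{\bar\varepsilon}{2}\ge\tfrac12(\varepsilon-3pU)\ge\tfrac{2^{p-2}}{pn^p}-\tfrac{3pU}{2}$, and since $U=\tfrac{1}{n^{2p}k^2}$ is of strictly smaller order than $n^{-p}$, for $n$ sufficiently large the term $\tfrac{3pU}{2}$ is at most $\tfrac{2^{p-3}}{pn^p}$, leaving $\max\{\|x\|_p^p:x\in Q\}\le 1-\tfrac{2^{p-3}}{pn^p}$.

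To prove $(\star)$ I first use radial scaling to reduce it to the outer edge (the chord) $[\bar p_v,\bar p_{v+1}]$: any $x$ in the triangle is of the form $x=\lambda z$ with $\lambda\in[0,1]$ and $z$ on the chord, and since $\lambda^p\le\lambda$ for $p\ge 1$ while the right-hand side of $(\star)$ is nonnegative and positively homogeneous, $(\star)$ for $x$ follows from $(\star)$ for $z$. On the chord I parametrise $z(s)=(1-s)\bar p_v+s\bar p_{v+1}$ and set $g(s):=\|z(s)\|_p^p$. The definition of $\bar q_v$ gives the exact identity $\bar q_v^T\bar p_v=\|\bar p_v\|_p^p=g(0)$, and a direct computation yields $g'(0)=p\,\bar q_v^T(\bar p_{v+1}-\bar p_v)$. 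Hence the function $\Phi(s):=\bar q_v^Tz(s)-g(s)$ is concave (its second derivative is $-g''\le 0$ since $g$ is convex), vanishes at $s=0$, and has slope $\Phi'(0)=(1-p)\bar q_v^T(\bar p_{v+1}-\bar p_v)=(p-1)\bar\varepsilon_v+O(U)>0$, where $\bar\varepsilon_v:=1-\bar q_v^T\bar p_{v+1}\ge\bar\varepsilon$, using $p\ge 2$.

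A concave function vanishing at $0$ with positive slope stays nonnegative up to its positive root; running the symmetric argument anchored at $\bar p_{v+1}$ (with $\bar q_{v+1}$) shows that the nonnegativity ranges of $\bar q_v^Tz-g$ and of $\bar q_{v+1}^Tz-g$ overlap, i.e.\ at the switchover point where $\bar q_v^Tz=\bar q_{v+1}^Tz$ at least one of the two is still nonnegative, which is exactly $(\star)$ on the chord. The main obstacle is precisely this chord estimate: in contrast to the lateral part (where a point is a radial multiple $t\bar p_v$ and $\|t\bar p_v\|_p^p=t^p\le t=\bar q_v^T(t\bar p_v)$ is trivial), a chord point sits right next to the sphere $\S_p^1$, so forcing its norm strictly below $1$ genuinely uses the curvature (strict convexity) of $\B_p^2$, which here is encoded in the concavity of $\Phi$ and in the sign $\Phi'(0)>0$ where $p\ge 2$ enters. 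One must also check that the rounding $p_v\mapsto\bar p_v$, $q_v\mapsto\bar q_v$ perturbs all these quantities only by $O(U)$, which does not affect the sign of $\Phi'(0)$ because $\bar\varepsilon_v\ge\bar\varepsilon$ dominates $U$; these $O(U)$ errors are absorbed into the "$n$ sufficiently large" already invoked above.
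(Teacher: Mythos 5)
Your global strategy is coherent: the radial-scaling reduction to the chord is valid (since $\lambda^p\le\lambda$ for $\lambda\in[0,1]$), the exact identity $\bar q_v^T\bar p_v=\|\bar p_v\|_p^p$ survives rounding, the computation $\Phi'(0)=(p-1)\bar\varepsilon_v+O(U)>0$ is correct, and your closing arithmetic (using $\bar\varepsilon\ge\tfrac{2^{p-1}}{pn^p}-3pU$ and $U=\tfrac{1}{n^{2p}k^2}$) would indeed finish the proof \emph{if} the inequality $(\star)$ were established. But the central step is not proved. You assert that the nonnegativity ranges of $\Phi(s)=\bar q_v^Tz(s)-g(s)$ and $\Psi(s)=\bar q_{v+1}^Tz(s)-g(s)$ overlap. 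Since the two linear forms agree at the switchover point $s^*$, we have $\Phi(s^*)=\Psi(s^*)$, so ``at least one of the two is nonnegative at $s^*$'' is literally the statement $\Phi(s^*)\ge 0$ --- which is exactly the hardest instance of $(\star)$. Your sentence restates the claim rather than proving it. Worse, the properties you have actually established (both functions concave, $\Phi(0)=0$, $\Phi'(0)>0$, $\Psi(1)=0$, $\Psi'(1)<0$, difference affine with a sign change) do \emph{not} imply the overlap: take $\Phi(s)=s(\tfrac15-s)$ and $\Psi(s)=(1-s)(s-\tfrac45)$. All of these hypotheses hold, the difference $\Phi-\Psi=\tfrac45-\tfrac85 s$ is affine with root $s^*=\tfrac12$, yet $\Phi(s^*)=\Psi(s^*)=-\tfrac{3}{20}<0$. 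So some genuinely geometric input about $\B_p^2$ must enter precisely at this point, and your proof supplies none.

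The missing step is also razor-thin, so it cannot be waved through by ``absorbing $O(U)$ errors'': for $p=2$ one has $\bar q_v=\bar p_v$, the switchover point $z(s^*)$ is the foot of the perpendicular from the origin onto the chord, and there $\Phi(s^*)=\bigl(\bar p_v-z(s^*)\bigr)^Tz(s^*)=0$ \emph{exactly}, for arbitrary rounded endpoints --- equality with zero slack. Any correct proof of $(\star)$ must therefore be exact at $p=2$ and must quantify curvature for $p>2$. For comparison, the paper's proof of Lemma~\ref{lem:far} sidesteps $(\star)$ entirely: it reduces to the flattest chord $[e_1,\bar p_2]$ (lowest curvature of $\S_p^1$), notes that the maximum of $\|\cdot\|_p^p$ over the truncated triangle $Q'$ is attained at the vertex $x^*$ where $H_=(e_1,1-\tfrac{\bar\varepsilon}{2})$ meets that chord, and bounds the two coordinates of $x^*$ separately to get $\|x^*\|_p^p\le\bigl(1-\tfrac{\bar\varepsilon}{2}\bigr)+\bigl(\tfrac{\bar\varepsilon}{2}(\tfrac2n+U)\bigr)^p$, after which the same arithmetic you use concludes. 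If you want to salvage your route, you must prove $\Phi(s^*)\ge0$ by an explicit computation of this kind at the switchover point; concavity bookkeeping alone will not do it.
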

\begin{proof}
Let $Q':=\conv\left\{0, e_1, \bar p_2 \right\}\cap H_\leq \left(e_1, 1-\frac{\bar \varepsilon}{2}\right) \cap H_\leq \left( \bar q_2, 1-\frac{\bar \varepsilon}{2}\right)$. Since $e_1$ is a point of lowest curvature on the boundary of $\B_p^2$, we have  $ \max \{\|x\|_p^p: x \in Q \} \leq \max \{\|x\|_p^p: x \in  Q'\}= \|x^*\|_p^p$, where $x^*$ fulfills $e_1^T x^*= 1-\frac{\bar \varepsilon}{2}$ and $x^*= \lambda e_1 + (1-\lambda)\bar p_2$ for some $\lambda \in [0,1]$. From the first property, we can deduce $\lambda = 1-\frac{\bar \varepsilon}{2}$, which implies $e_2^T x^*= \frac{\bar \varepsilon}{2} e_2^T \bar p_2$. By Lemma~\ref{lem:distances}, $e_2^T \bar p_2 \leq \frac{2}{n}+U$. Putting things together, we obtain
\begin{equation}
 \|x^*\|  \leq \left(1-\frac{\bar \varepsilon}{2}\right)^p + \left(\frac{\bar \varepsilon}{2}\left(\frac{2}{n}+U\right)\right)^p \leq \left(1-\frac{\bar \varepsilon}{2}\right) +\left(\frac{\bar \varepsilon}{2}\left(\frac{2}{n}+U\right)\right)^p. 
 \label{eq:normFar}
\end{equation}
By Lemma~\ref{lem:epsilon} and Equation \eqref{eq:barvarepsilon}, $\bar \varepsilon \geq \frac{2^{p-1}}{pn^p} - 3pU$. By the choice of $U$ and the assumption that $n$ is sufficiently large, we can therefore continue \eqref{eq:normFar} and obtain

$$ \left(1-\frac{\bar \varepsilon}{2}\right)+\left(\frac{\bar \varepsilon}{2}\left(\frac{2}{n}+U\right)\right)^p  \leq  1-\frac{2^{p-3}}{pn^p}. $$

\end{proof}

\paragraph{Hardness part of Theorem~\ref{theo:mainTheo}.\\}
The following lemma shows that it is sufficient to carry out the reduction described by Lemma~\ref{lem:infinitePrecision} with finite precision as described in this subsection. It completes the proof of the hardness part of Theorem~\ref{theo:mainTheo}.  For notational convenience, we use the clique number\index{clique number} $\omega(G)$\index{$\omega(G)$} to denote the size of the biggest clique in a graph $G=([n], E)$.

\begin{lem}[Reduction with finite precision]
Let $(n,k,E)$ be an instance of \textsc{Clique}, $G=([n], E)$ and $\bar P \subseteq \R^{2k}$ the polytope with rounded coordinates constructed above in \eqref{eq:barP}. Then, 
\begin{equation}
\omega(G)\geq k \quad \Longleftrightarrow \quad   \max\{\|x\|_p^p: x\in \bar P\} \geq  k(1-U)^p
\label{eq:yes}
\end{equation}
and 
\begin{equation}
\omega(G)< k  \quad \Longleftrightarrow \quad \max\{\|x\|_p^p: x\in \bar P\} \leq  (k-1)(1+U)^p +1 -\frac{2^{p-3}}{pn^p}.
\label{eq:no}
\end{equation}
\label{lem:finitePrecision}
\end{lem}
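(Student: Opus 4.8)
The plan is to reduce the two biconditionals to a pair of one-directional implications together with a separation of the two numerical thresholds. Write $A:=k(1-U)^p$ for the Yes-threshold of \eqref{eq:yes} and $B:=(k-1)(1+U)^p+1-\frac{2^{p-3}}{pn^p}$ for the No-threshold of \eqref{eq:no}. I would first establish (i) $\omega(G)\ge k \Rightarrow \max\{\|x\|_p^p:x\in\bar P\}\ge A$, then (ii) $\omega(G)<k \Rightarrow \max\{\|x\|_p^p:x\in\bar P\}\le B$, and separately the gap $B<A$ (for $n$ sufficiently large). Once these three facts are available, all four implications follow formally: (i) and (ii) are the forward directions of \eqref{eq:yes} and \eqref{eq:no}; the backward direction of \eqref{eq:yes} is the contrapositive of (ii) combined with $B<A$ (if $\omega(G)<k$ then the maximum is $\le B<A$), and the backward direction of \eqref{eq:no} is the contrapositive of (i) combined with $B<A$.

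For (i) I would exhibit an explicit feasible point. Given a clique $\{v_1,\dots,v_k\}\subseteq[n]$, set $\bar x:=(\bar p_{v_1}^T,\dots,\bar p_{v_k}^T)^T$. The norm bound is immediate: by \eqref{eq:normDifference} each block satisfies $\|\bar p_{v_i}\|_p\ge 1-U$, hence $\|\bar x\|_p^p\ge k(1-U)^p=A$. The feasibility $\bar x\in\bar P$ is the structural core. Membership in $\bar P_2$ holds because each $\bar p_{v_i}$ is an extreme point of $\bar P_1$ by Lemma~\ref{lem:barP1}. For the constraints indexed by $N$ and by the diagonal, I would use that $\bar q_u^T\bar x_i$ is close to $1$ precisely when $u=v_i$ and is at most $1-\bar\varepsilon$ otherwise; consequently the constraint $\bar E^{ij}_{uv}\cap\bar F^{ij}_{uv}$ can become binding only when $u=v_i$ and $v=v_j$. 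Since the $v_i$ are pairwise adjacent and distinct, no such index pair lies in $N$ and no diagonal pair $\{v,v\}$ is met, so each constraint is respected up to an error of order $U$ coming from rounding, which is dominated by the margin $\bar\varepsilon\gg U$.

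For (ii) I would contrapose Lemmas~\ref{lem:notEverywhereClose} and~\ref{lem:far}. Fix any $\bar x\in\bar P$. If every block were \enquote{close}, i.e. $\bar q_{m_i(\bar x)}^T\bar x_i>1-\frac{\bar\varepsilon}{2}$ for all $i$, then Lemma~\ref{lem:notEverywhereClose} would place the vertex $(\bar p_{m_1(\bar x)}^T,\dots,\bar p_{m_k(\bar x)}^T)^T$ in $\bar P$; reading off the $\bar E,\bar F$ constraints exactly as in the feasibility analysis of (i), run in reverse, the reduced indices $m_i(\bar x)\bmod n$ would be pairwise distinct and pairwise adjacent, hence a $k$-clique, contradicting $\omega(G)<k$. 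So some block $i_0$ is \enquote{far}, $\bar q_{m_{i_0}(\bar x)}^T\bar x_{i_0}\le 1-\frac{\bar\varepsilon}{2}$. Triangulating $\bar P_1$ as a fan from the origin gives $\bar x_{i_0}\in\conv\{0,\bar p_v,\bar p_{v+1}\}$ for some $v$, and since $m_{i_0}(\bar x)$ is the argmax, both $\bar q_v^T\bar x_{i_0}$ and $\bar q_{v+1}^T\bar x_{i_0}$ are $\le 1-\frac{\bar\varepsilon}{2}$; thus $\bar x_{i_0}$ lies in the set $Q$ of Lemma~\ref{lem:far} and $\|\bar x_{i_0}\|_p^p\le 1-\frac{2^{p-3}}{pn^p}$. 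Bounding the remaining $k-1$ blocks by $\|\bar x_j\|_p^p\le(1+U)^p$ (each block lies in $\bar P_1\subseteq(1+U)\B_p^2$) and summing yields $\|\bar x\|_p^p\le B$.

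Finally, the gap $B<A$ is a direct estimate: using $(1-U)^p\ge 1-pU$ and the convexity bound $(1+U)^p\le 1+(2^p-1)U$ gives $A-B\ge\frac{2^{p-3}}{pn^p}-c\,kU$ for an explicit $c=c(p)$, and the choice $U=\frac{1}{n^{2p}k^2}$ makes the subtracted term $O(n^{-2p}k^{-1})$, which is dominated by $\frac{2^{p-3}}{pn^p}=\Theta(n^{-p})$ once $n$ is large. I expect the main obstacle to be precisely this interplay of error scales: one must keep the rounding slack $U=\Theta(n^{-2p})$ strictly below both the vertex separation $\bar\varepsilon=\Theta(n^{-p})$ (from Lemma~\ref{lem:epsilon} and \eqref{eq:barvarepsilon}) and the curvature gap $\frac{2^{p-3}}{pn^p}=\Theta(n^{-p})$, so that the perturbed constraints neither reject genuine clique points in (i) nor admit spurious near-clique points in (ii), and so that the thresholds stay separated. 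Carrying the safety margins of order $\bar\varepsilon$ (the factors $\frac{\bar\varepsilon}{2}$ already appearing in Lemmas~\ref{lem:notEverywhereClose} and~\ref{lem:far}) through every inequality is the delicate part, and it succeeds only because $U\ll\bar\varepsilon$, which is exactly what the sufficiently-large-$n$ hypothesis provides.
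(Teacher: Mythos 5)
Your proposal is correct and follows essentially the same route as the paper's proof: both reduce the two equivalences to their forward implications via the threshold gap $B<A$, certify the Yes-case with the explicit point $(\bar p_{v_1}^T,\dots,\bar p_{v_k}^T)^T$ and the norm bound from \eqref{eq:normDifference}, and handle the No-case by combining Lemma~\ref{lem:notEverywhereClose} (contradiction if every block is close) with Lemma~\ref{lem:far} for the far block and the $(1+U)^p$ bound on the remaining blocks. The only difference is that you spell out details the paper leaves implicit (feasibility of the clique point, the fan decomposition placing $\bar x_{i_0}$ in $Q$, and the explicit estimate showing $B<A$), which is a welcome but not essentially different elaboration.
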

\begin{proof}
Since $(k-1)(1+U)^p +1 -\frac{2^{p-3}}{pn^p} < k(1-U)^p$, if suffices to show the \enquote{forward} direction in both \eqref{eq:yes} and \eqref{eq:no}.

If $\omega(G) \geq k$ and $\{v_1, \dots, v_k\}\subseteq [n]$ is the vertex set of a $k$-clique in $G$, then $\bar P$ contains the vertex  $x^*= (\bar p_{v_1}^T,\dots, \bar p_{v_k}^T)^T$ and $\|x^*\|_p^p \geq k(1-U)^p$ by \eqref{eq:normDifference}.

Assume now that $\omega(G) < k$ and let $x^* \in \bar P$ be a vertex of maximal norm in $\bar P$. If $ \bar q_{m_i(x^*)}^T  x^* > 1 -\frac{\bar \varepsilon}{2}$ for all $i \in [k],$ Lemma~\ref{lem:notEverywhereClose} would imply that $(\bar p_{m_1(x^*)}^T, \dots, \bar p_{m_k(x^*)}^T)^T$ is a vertex of $\bar P$ and therefore contradict  $\omega(G) < k$. Hence, there is some $i \in [k]$ such that $\bar q_{m_i(x^*)}^T x^* \leq 1- \frac{\bar \varepsilon}{2}$. By adding a constant number of vertices to $G$, we can assume that $n$ is sufficiently large and apply Lemma~\ref{lem:far} in order to obtain  $\|x^*_i\|_p^p \leq 1 -\frac{2^{p-3}}{pn^p}$. As $\|x_j^*\|_p^p \leq (1+U)^p $ for all $j \in [k]\setminus\{i\}$, the right hand side of \eqref{eq:no} follows.
\end{proof}

\paragraph{} The construction of the polytope $P$ (or $\bar P$) relies on the fact that, for $p \geq 2$, the boundary of the unit ball of a $p$-norm contains no straight line segment. This is not the case for $p=1$ and we show in the next subsection that \normmax[1] is indeed in FPT.

\subsection{Tractability}
\paragraph{} This subsection completes the proof of Theorem~\ref{theo:mainTheo} by showing that \normmax[1] is fixed parameter tractable. 

The statement of Theorem~\ref{theo:normmaxFPT} is slightly more general than needed for Theorem~\ref{theo:mainTheo} but will be of use in Section~\ref{sec:approx}. The result for \normmax[1] can be obtained from Theorem~\ref{theo:normmaxFPT} by choosing $\varphi_d: \R^d \rightarrow \R; x \mapsto \|x\|_1$ in Problem~\ref{prb:maxPhi}.

\begin{prb}[\textsc{Max}-$\Phi$] \label{prb:maxPhi}%
Suppose that for each $d\in \N$, $\varphi_d: \R^d \rightarrow \R$ is positive homogeneous of degree 1 and let $\Phi:= (\varphi_d)_{d\in \N}$.
The problem \textsc{Max}-$\Phi$ is defined as follows:
\medskip

\begin{tabular}{ll}
\textbf{Input:} & $d \in \N$, $\gamma \in \Q$, rational $\CH$-presentation of a polytope $P\subseteq \R^d$ \\
\textbf{Parameter:} & $d$ \\
\textbf{Question:}& Is $\max \{\varphi_d(x) : x \in P\} \geq \gamma$?
\end{tabular}
\end{prb}

\begin{theo}[Tractability of \textsc{Max}-$\Phi$] \label{theo:normmaxFPT}%
For each $d\in \N$, let $\varphi_d: \R^d \rightarrow \R$ be positive homogeneous of degree 1 and $\Phi:= (\varphi_d)_{d\in \N}$. Suppose that, for $d \in \N$, the set $\B^d:= \{x\in\R^d: \varphi_d(x)\leq 1\}$ is a full-dimensional polytope, a rational $\CH$-presentation of which can be computed in time $f(d)$ for a computable function $f:\N \rightarrow \N$. Then, 
\textsc{Max}-$\Phi$ is in FPT and can be solved in time $O(f(d)T_{LP}(d,n))$.
\end{theo}
\begin{proof}
Let $\B^d= \bigcap_{i=1}^m H_\leq (a_i, 1)$ be an $\CH$-presentation of $\B^d$. Then, $m \in O(f(d))$. Because of the homogeneity of $\varphi_d$, 
$\{x \in \R^d: \varphi_d(x) \leq \lambda\} = \lambda \B^d$ and $\varphi_d(x)= \max_{i\in [m]} a_i^T x$.
Hence,
$$\max \{\varphi_d(x) : x \in P\} = \max_{i\in [m]} \max \{a_i^T x: x \in P\}.$$
Thus, \textsc{Max}-$\Phi$ can be decided by the following algorithm:

\begin{compactenum}[(1)]
\item Compute an $\CH$-presentation of $\B^d$ in time $f(d)$. 
\item Solve $m$ linear programs $\max\{a_i^T x : x \in P\}$ in time $T_{LP}(d,n)$. 
\item Compare the biggest objective value to $\gamma$.
\end{compactenum}
\vspace{0.1cm}

As $T_{LP}(d,n) \in O(2^{2^d}n)$, the above algorithm has FPT running time $O(f(d)2^{2^d}n)$.
\end{proof}

\paragraph{} We  can also establish fixed parameter tractability for the two problems $[-1,1]$-\textsc{Parmax}$_p$ and $[0,1]$-\textsc{Parmax}$_p$ as considered in \cite{bgkl-90}.

\begin{prb}[{$[0,1]$-\textsc{Parmax}$_p$}] \label{prb:parmax01}%
\begin{tabular}{ll}
\textbf{Input:} &$d \in \N$, $\gamma\in \Q$, $v_1, \dots, v_n \in \Q^d$ linearly independent \\
\textbf{Parameter:} & $d$ \\
\textbf{Question:}& Is $\max \{\|x\|_p^p: x \in \sum_{i=1}^d [0,1] v_i \} \geq \gamma $?
\end{tabular}
\end{prb}

\begin{prb}[{$[-1,1]$-\textsc{Parmax}$_p$}] \label{prb:parmax-11}%
\begin{tabular}{ll}
\textbf{Input:} &$d\in \N$, $\gamma \in \Q$, $v_1, \dots, v_n \in \Q^d$ linearly independent \\
\textbf{Parameter:} & $d$ \\
\textbf{Question:}& Is $\max \{\|x\|_p^p: x \in \sum_{i=1}^d [-1,1] v_i \} \geq \gamma $?
\end{tabular}
\end{prb}

\paragraph{} In \cite{bgkl-90}, it was shown that Problem~\ref{prb:parmax01} and \ref{prb:parmax-11} are both $\NP$-hard, so that the $\NP$-hardness of \normmax[p] persists even on very restricted instances. However, the following theorem shows that these problems are fixed parameter tractable, when parametrized by the dimension. So in this case, the hardness of \textsc{Parmax}$_p$ is really a phenomenon of high dimensions.

\begin{theo}[Tractability of \textsc{Parmax}$_p$]
For all $p \in \N$, Problems~\ref{prb:parmax01} and \ref{prb:parmax-11} are in FPT.
\end{theo}
\begin{proof}
We only consider Problem~\ref{prb:parmax01}; the argument for Problem~\ref{prb:parmax-11} is exactly the same. The vertices of the polytope $P:= \sum_{i=1}^d [0,1] v_i$ are all of the form $\sum_{i=1}^d \lambda_i v_i$ for some vector $\lambda= (\lambda_1, \dots, \lambda_d)^T \in \{0,1\}^d$. As the the maximum of $\|\cdot\|_p^p$ is attained at a vertex of $P$, it suffices to compute the norm of all $2^d$ possible choices of $\lambda \in \{0,1\}^d$. This clearly is an FPT-algorithm for Problem~\ref{prb:parmax01}.
\end{proof}

\section{Approximation}
\label{sec:approx}
\subsection{FPT-Approximation for Fixed Accuracy}
\paragraph{} In \cite{b-02}, it is shown that, for all $p \in \N$, {\normmax[p]} is not contained in APX (i.e.~there is no polynomial time approximation algorithm with a fixed performance guarantee). As norm maximization with a polytopal unit ball is in FPT, we can give a straightforward approximation algorithm that has FPT running time for any fixed accuracy by replacing the unit ball $\B_p^d$ by an approximating polytope. The following proposition concerning the complexity of such a polytope can be obtained from \cite[Lemmas 3.7 and 3.8]{br-01}.

\begin{prop}[Approximation of balls by polytopes]
Let $p\in \N$ and $\beta \in \N$ be fixed. There is a symmetric polytope $B \subseteq \R^d$ with a rational $\CH$-presentation and at most $O(\beta^d)$ facets such that 
\begin{equation} 
\B_p^d \subseteq B \subseteq \frac{\beta}{\beta-1} \B_p^d,
\label{eq:ballInclusion}
\end{equation}
and $B$ can be computed in time $O(\beta^d)$.
\label{prop:ballApprox}
\end{prop}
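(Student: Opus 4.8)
The plan is to realise $B$ as a polytope circumscribed about $\B_p^d$ whose bounding halfspaces are supporting halfspaces of $\B_p^d$ taken in the directions of a sufficiently fine net of the sphere $\S_p^{d-1}$, controlling the required fineness through the dual description of the $p$-norm. Write $q$ for the dual exponent, $\frac1p+\frac1q=1$, and set $\delta:=\frac1\beta$, so that the target dilation factor is $\frac{1}{1-\delta}=\frac{\beta}{\beta-1}$.

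First I would fix a finite, $0$-symmetric set $N\subseteq\S_p^{d-1}$ that is a $\delta$-net of the unit sphere with respect to $\|\cdot\|_p$, i.e.\ for every $w\in\S_p^{d-1}$ there is a $u\in N$ with $\|w-u\|_p\le\delta$. A standard volumetric argument, valid for the unit ball of any norm on $\R^d$, bounds the covering number by $(1+2/\delta)^d=(1+2\beta)^d$, which is singly exponential in $d$ with a base polynomial in $\beta$, matching the claimed $O(\beta^d)$ facet count. Such a net, together with explicit coordinates, can be produced in time proportional to its size, for instance by radially projecting a rational grid of spacing $\Theta(1/\beta)$ inside $\B_\infty^d$ onto $\S_p^{d-1}$.

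For each $u\in N$ let $n_u$ be the outer unit normal of $\B_p^d$ at $u$, namely $n_u=(\sgn(u_i)\,|u_i|^{p-1})_i$ as in \eqref{eq:qU}; then $n_u^Tu=\|u\|_p^p=1$ and, since $(p-1)q=p$, also $\|n_u\|_q=1$. I would set $B:=\bigcap_{u\in N}\{x\in\R^d:n_u^Tx\le1\}$. Each halfspace supports $\B_p^d$, so $\B_p^d\subseteq B$ is immediate, and $B$ is $0$-symmetric because $N$ is. For the outer inclusion, take $x\in B$, write $x=\rho w$ with $w\in\S_p^{d-1}$, and choose $u\in N$ with $\|w-u\|_p\le\delta$. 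Then by Hölder's inequality
\[
n_u^Tx=\rho\bigl(n_u^Tu+n_u^T(w-u)\bigr)\ge\rho\bigl(1-\|n_u\|_q\,\|w-u\|_p\bigr)\ge\rho(1-\delta),
\]
and since $x\in B$ forces $n_u^Tx\le1$, we obtain $\rho\le\frac{1}{1-\delta}=\frac{\beta}{\beta-1}$, that is $B\subseteq\frac{\beta}{\beta-1}\B_p^d$. This is the conceptual core, and the facet count is simply $|N|$.

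The main obstacle is producing a genuinely \emph{rational} $\CH$-presentation while keeping \emph{both} inclusions simultaneously. The exact support values $\|n_u\|_q=1$ are harmless, but the net directions are in general irrational, so I would pass to rational perturbations: round each $n_u$ to a nearby $\tilde n_u\in\Q^d$ and replace the right-hand side by a rational $b_u\ge\max_{x\in\B_p^d}\tilde n_u^Tx=\|\tilde n_u\|_q$, rounding \emph{upward} so that the halfspace still contains $\B_p^d$. Since $p$ is a fixed constant, all these quantities can be evaluated to $\mathrm{poly}(d,\log\beta)$ bits, and choosing the rounding precision a small constant factor finer than $\delta$ lets the Hölder estimate above absorb the perturbation, so the factor $\frac{\beta}{\beta-1}$ survives (shrinking $\delta$ slightly if needed). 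The construction then runs in time $(1+2\beta)^d\cdot\mathrm{poly}(d,\log\beta)=O(\beta^d)$. I expect the bookkeeping of this outer-inclusion-versus-rounding trade-off, rather than any deep geometry, to be the only delicate point — which is presumably why the authors prefer to quote the ready-made estimates of \cite[Lemmas 3.7 and 3.8]{br-01}.
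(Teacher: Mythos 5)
The paper itself contains no proof of this proposition --- it is obtained purely as a citation of \cite[Lemmas 3.7 and 3.8]{br-01} --- so your self-contained construction is necessarily a different route. Its geometric core is correct and standard: the supporting halfspaces $\{x : n_u^Tx\le 1\}$ at the net points give $\B_p^d\subseteq B$ (each one contains $\B_p^d$ by H\"older), symmetry of the net gives symmetry of $B$, and your estimate $n_u^Tx\ge\rho\bigl(1-\|n_u\|_q\|w-u\|_p\bigr)\ge\rho(1-\delta)$ combined with $n_u^Tx\le1$ gives the outer inclusion $B\subseteq\frac{1}{1-\delta}\B_p^d$ (and in particular boundedness of $B$). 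The rational-rounding step is also handled correctly in outline: rounding the normals, rounding the right-hand sides \emph{upward} past $\|\tilde n_u\|_q$ to preserve the inner inclusion, and taking $\delta$ a constant factor smaller to absorb the perturbation in the outer one. What your route buys is transparency and an explicit dependence on $p$; what the citation buys the authors is precisely the quantitative part of the statement, and that is where your argument has a real gap.

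The gap is in the \emph{construction} of the net. The volumetric bound $(1+2/\delta)^d=(1+2\beta)^d$ is correct but purely existential; your constructive substitute --- radially projecting a grid of spacing $\Theta(1/\beta)$ --- does not yield a $\delta$-net in the $\ell_p$ metric. Rounding to an $\ell_\infty$-grid of spacing $s$ incurs an $\ell_p$-error of order $s\,d^{1/p}$ (the per-coordinate errors accumulate across $d$ coordinates), so the spacing must be $\Theta\bigl(1/(\beta d^{1/p})\bigr)$, which inflates the facet count and construction time to $\bigl(C\beta d^{1/p}\bigr)^d=C^d d^{d/p}\beta^d$; using $\|n_u\|_1\le d^{1/p}$ with an $\ell_\infty$-net instead leads to the same loss. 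Note also that even the existential bound is $(1+2\beta)^d\le 3^d\beta^d$, i.e.\ of the form $O(\beta)^d$ rather than $O(\beta^d)$ with a constant independent of $d$. Neither discrepancy harms the way the proposition is used downstream --- Lemma~\ref{lem:approxFPT} only needs \emph{some} bound $F(\beta,d)$ to remain FPT for fixed $\beta$, and Lemma~\ref{lem:NoBetterApprox} is untouched --- but as a proof of the proposition as literally stated (at most $O(\beta^d)$ facets, computable in time $O(\beta^d)$) your construction falls short by dimension-dependent factors, and closing exactly this quantitative gap is what the appeal to \cite{br-01} is for.
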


\begin{lem}[FPT-Approximation algorithm for fixed accuracy]
Let $p\in \N$ and $\beta \in \N$ be fixed. There is an algorithm which for every $\CH$-presented polytope $P \subseteq \R^d$ runs in time $O(\beta^d T_{LP}(d,n))$ and produces an $\bar{x}\in P$ such that
$$ \|\bar{x}\|_p^p \geq \left(\frac{\beta-1}{\beta}\right)^p \max\{\|x\|_p^p : x \in P \}.$$
\label{lem:approxFPT}
\end{lem}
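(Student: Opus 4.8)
The plan is to instantiate Theorem~\ref{theo:normmaxFPT} with the polytopal norm whose unit ball is the approximating polytope supplied by Proposition~\ref{prop:ballApprox}. Concretely, I would first invoke Proposition~\ref{prop:ballApprox} to compute, in time $O(\beta^d)$, a $0$-symmetric polytope $B \subseteq \R^d$ with a rational $\CH$-presentation and $O(\beta^d)$ facets satisfying $\B_p^d \subseteq B \subseteq \frac{\beta}{\beta-1}\B_p^d$. Let $\varphi_d$ be the gauge (Minkowski functional) of $B$, i.e.\ the norm whose unit ball is exactly $B$; it is positive homogeneous of degree $1$ and, since $B$ has $O(\beta^d)$ facets with a computable $\CH$-presentation, it fits the hypotheses of Theorem~\ref{theo:normmaxFPT} with the computable function taken to be $f(d) = O(\beta^d)$. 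Running the algorithm of that theorem, which solves one linear program per facet of $B$ and keeps the best optimizer, produces in time $O(\beta^d T_{LP}(d,n))$ a point $\bar x \in P$ maximizing $\varphi_d$ over $P$.

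The approximation guarantee then follows from a standard sandwich argument. The inclusions from Proposition~\ref{prop:ballApprox} translate, via the homogeneity of both norms, into the pointwise bounds
$$\frac{\beta-1}{\beta}\|x\|_p \leq \varphi_d(x) \leq \|x\|_p \qquad \text{for all } x \in \R^d,$$
where the right inequality comes from $\B_p^d \subseteq B$ and the left from $B \subseteq \frac{\beta}{\beta-1}\B_p^d$. Writing $x^* \in \argmax\{\|x\|_p : x \in P\}$ and using that $\bar x$ maximizes $\varphi_d$ over $P$, one obtains
$$\|\bar x\|_p \geq \varphi_d(\bar x) \geq \varphi_d(x^*) \geq \frac{\beta-1}{\beta}\|x^*\|_p = \frac{\beta-1}{\beta}\max\{\|x\|_p : x \in P\}.$$
Raising both sides to the $p$-th power yields the claimed bound on $\|\bar x\|_p^p$.

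There is essentially no hard step here; the only points requiring care are bookkeeping ones. First, I would note that the algorithm of Theorem~\ref{theo:normmaxFPT} not only decides the threshold question but actually returns a maximizer $\bar x$, since each of the $O(\beta^d)$ linear programs yields an optimal vertex of $P$ and we simply retain the one with the largest objective value. Second, one must confirm that the total running time is dominated by the $O(\beta^d)$ linear-program solves, the construction of $B$ being cheaper, so that the overall cost is $O(\beta^d T_{LP}(d,n))$ as claimed.
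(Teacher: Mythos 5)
Your proposal is correct and follows essentially the same route as the paper's proof: compute the approximating polytope $B$ from Proposition~\ref{prop:ballApprox}, maximize its gauge functional (the paper writes it as $\|\cdot\|_B$) over $P$ via Theorem~\ref{theo:normmaxFPT}, and conclude with the same sandwich inequality $\frac{\beta-1}{\beta}\|x\|_p \leq \|x\|_B \leq \|x\|_p$ chained through the maximizers. The bookkeeping points you flag (that the algorithm returns an actual maximizer, and that constructing $B$ is dominated by the linear-program solves) are likewise implicit in the paper's argument.
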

\begin{proof}
The following algorithm has the desired properties:

\begin{enumerate}[(1)]
\item Compute an $\CH$-presentation of a symmetric polytope $B\subseteq \R^d$ with the properties of Proposition~\ref{prop:ballApprox} and let $\|\cdot\|_B: \R^d \rightarrow \R; x \mapsto \|x\|_B:= \min\{\lambda\geq 0: x \in \lambda B\}$ 
\item Choose $\bar{x} \in \argmax\{\|x\|_B: x \in P\}$.
\end{enumerate}

It follows from Proposition~\ref{prop:ballApprox} that step (1) can be accomplished in time $O(\beta^d)$. As the number of facets of $B$ is in $O(\beta^d)$, it follows from Theorem~\ref{theo:normmaxFPT} that the maximization of $\|\cdot\|_B$ over $P$ can be done in time $O(\beta^d T_{LP}(d,n))$.

In order to show the performance ratio of the above algorithm, observe that Property~\eqref{eq:ballInclusion} of $B$ implies that $\frac{\beta -1}{\beta} \|x\|_p \leq \|x\|_B\leq \|x\|_p$ for all $x \in \R^d.$
Hence, if $x^* \in \argmax\{\|x\|_p^p: x \in P\}$, we get
$$ \|\bar{x}\|_p^p \geq \|\bar{x}\|_B^p \geq\|x^*\|_B^p \geq \left(\frac{\beta -1}{\beta}\right)^p \|x^*\|_p^p= \left(\frac{\beta -1}{\beta}\right)^p \max\{\|x\|_p^p: x \in P\}.$$

\end{proof}

\subsection{No FPT-approximation for Variable Accuracy}
\paragraph{} Finally, we will show that the straightforward approximation of the previous subsection is already best possible in the sense that there is no algorithm with polynomial dependence on the approximation quality and exponential dependence only on the dimension. Hence, combined with Lemma~\ref{lem:approxFPT}, Lemma~\ref{lem:NoBetterApprox} completes the proof of Theorem~\ref{theo:mainTheoApprox}. 
In fact, the basis for this has already been established in Lemma~\ref{lem:finitePrecision} and we can give the result right away.

\begin{lem}[No polynomial dependence on $\beta$]
Let $f: \N \rightarrow \R$ be a computable function and $q: \R^3 \rightarrow \R$ a polynomial function. If W[1]$\neq$FPT, there is no algorithm which for every $\CH$-presented polytope $P \subseteq \R^d$ runs in time $O(f(d)q(\beta, d, n))$ and produces an $\bar{x}\in P$ such that
$$ \|\bar{x}\|_p^p \geq \left(\frac{\beta-1}{\beta}\right)^p \max\{\|x\|_p^p : x \in P \}.$$
\label{lem:NoBetterApprox}
\end{lem}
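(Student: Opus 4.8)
The plan is to prove the contrapositive: assuming such an approximation algorithm exists, I would construct an FPT algorithm for \textsc{Clique}, contradicting W[1]$\neq$FPT. The starting point is the reduction already established in Lemma~\ref{lem:finitePrecision}. Given an instance $(n,k,E)$ of \textsc{Clique}, I first build the rounded polytope $\bar P \subseteq \R^{2k}$ from Equation~\eqref{eq:barP} in polynomial time. By Lemma~\ref{lem:finitePrecision}, deciding whether $\omega(G)\geq k$ amounts to distinguishing the \enquote{Yes}-value $k(1-U)^p$ from the \enquote{No}-value $(k-1)(1+U)^p + 1 - \frac{2^{p-3}}{pn^p}$ of $\max\{\|x\|_p^p : x \in \bar P\}$.

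The key step is to choose the accuracy $\beta$ just fine enough that an approximation of quality $\left(\frac{\beta-1}{\beta}\right)^p$ forces the output $\bar x$ to land on the \enquote{Yes} side of this gap precisely when $\omega(G)\geq k$. Concretely, I would select $\beta$ as a polynomial in $n$ and $k$ (using the estimates $U = \frac{1}{n^{2p}k^2}$ and the spectral gap from Lemma~\ref{lem:finitePrecision}) so that whenever a $k$-clique exists, the approximation guarantee yields $\|\bar x\|_p^p > (k-1)(1+U)^p + 1 - \frac{2^{p-3}}{pn^p}$, whereas if no $k$-clique exists no point of $\bar P$ can exceed the \enquote{No}-bound at all. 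Then computing $\|\bar x\|_p^p$ and comparing it to the threshold decides \textsc{Clique}. The point is that $\frac{2^{p-3}}{pn^p}$ decays only polynomially in $n$ while the multiplicative loss $1-\left(\frac{\beta-1}{\beta}\right)^p$ is of order $1/\beta$, so a polynomially large $\beta$ suffices to make the relative error smaller than the relative gap.

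Now, by hypothesis the assumed algorithm runs in time $O(f(2k)\,q(\beta, 2k, n'))$, where $n'$ is the (polynomially bounded) number of facets of $\bar P$ and the ambient dimension is $d=2k$. Since $\beta$ is polynomial in $n$ and $k$ and $q$ is a polynomial, $q(\beta, 2k, n')$ is bounded by a polynomial in $n$ for each fixed $k$; together with the FPT-factor $f(2k)$ depending only on $k$, the total running time is of the form $g(k)\cdot\mathrm{poly}(n)$. Adding the polynomial cost of constructing $\bar P$ and evaluating $\|\bar x\|_p^p$, this is an FPT algorithm for \textsc{Clique} parameterized by $k$, contradicting W[1]$\neq$FPT.

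The main obstacle will be verifying that a polynomially bounded $\beta$ genuinely separates the two regimes of Lemma~\ref{lem:finitePrecision}: I must check that the gap $k(1-U)^p - \bigl[(k-1)(1+U)^p + 1 - \frac{2^{p-3}}{pn^p}\bigr]$ is at least a fixed polynomial fraction of the optimum value (which is $\Theta(k)$), and that the multiplicative error incurred by the $\beta$-approximation on a value of size $\Theta(k)$ stays below this gap for $\beta = \mathrm{poly}(n,k)$. This is a routine but careful estimate combining the bounds on $U$ and $\bar\varepsilon$ already derived, and it is where the claim \enquote{no polynomial dependence on $\beta$} is pinned down, since allowing $\beta$ to grow only polynomially is exactly what the forbidden running-time form $O(f(d)q(\beta,d,n))$ would permit.
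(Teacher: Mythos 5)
Your proposal is correct and follows essentially the same route as the paper's own proof: reduce from \textsc{Clique} via the rounded polytope $\bar P$ of Lemma~\ref{lem:finitePrecision}, pick $\beta$ polynomially bounded in $n$ and $k$ (the paper takes $\beta = \lceil pn^pk/C\rceil$) so that the multiplicative loss $\left(\frac{\beta-1}{\beta}\right)^p$ is smaller than the relative gap between the Yes-value $k(1-U)^p$ and the No-value $(k-1)(1+U)^p+1-\frac{2^{p-3}}{pn^p}$, and then threshold the approximate maximizer to decide \textsc{Clique} in FPT time. The separation estimate you flag as the remaining obstacle is exactly the step the paper also leaves as a routine check (\enquote{one easily checks that there is a suitable constant $C>0$}), and your order-of-magnitude reasoning for why a polynomial $\beta$ suffices is the right justification for it.
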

\begin{proof}
 Let $(n, k, E)$ be an instance of the W[1]-hard problem \textsc{Clique} and $\bar P\subseteq \R^{2k}$ the polytope constructed in Equation \eqref{eq:barP}. By Lemma~\ref{lem:finitePrecision}, it can be decided if $G=([n], E)$ has a clique of size $k$ by determining, whether 
\begin{equation}
\label{eq:simplifiedDecision}
\begin{array}{rl}
\textbf{either}& \max\{\|x\|_p^p : x \in \bar P\} \geq k(1-U)^p \\
\textbf{or}& \max\{\|x\|_p^p : x \in \bar P \} \leq (k-1)(1+U)^p +1 - \frac{2^{p-3}}{pn^p}
\end{array}
\end{equation}
Assume that an algorithm with the claimed properties exists and call it $\CA$. One easily checks that there is a suitable constant $C >0$ such that it suffices to choose $\beta \geq \frac{pn^pk}{C}$ in order to fulfill
$$\left(\frac{\beta}{\beta-1}\right)^p \left((k-1)(1+U)^p +1 - \frac{2^{p-3}}{pn^p}\right) < k(1-U)^p.$$ 

Hence, we can run the following algorithm $\CA'$ in order to decide \eqref{eq:simplifiedDecision}:
\begin{enumerate}[1)]
\item Choose $\beta:= \left \lceil \frac{pn^pk}{C} \right \rceil$. 
\item Run $\CA$ on the polytope $\bar P$ and obtain an approximate normmaximal vertex $\bar{x} \in \bar P$.
\item If $\|\bar{x}\|_p^p  > (k-1)(1+U)^p +1 - \frac{2^{p-3}}{pn^p}$, decide $\max\{\|x\|_p^p : x \in \bar P\} \geq k(1-U)^p$.
\item[] Else, decide $\max\{\|x\|_p^p : x \in P \} \leq (k-1)(1+U)^p +1 - \frac{2^{p-3}}{pn^p}$.
\end{enumerate}
By the properties of $\CA$, the running time of the algorithm $\CA'$ is $O(f(d)q(n^{p}k, d, n))$ and by  Lemma~\ref{lem:finitePrecision} and the choice of $\beta$, $\CA'$ decides \eqref{eq:simplifiedDecision} correctly. $\CA'$ is thus an FPT algorithm for \textsc{Clique}. Unless FPT=W[1], this is a contradiction to the fact that \textsc{Clique} is W[1]-hard.
\end{proof}

\section{Some Implications}
\label{sec:corollaries}

\paragraph{}As stated in the introduction, norm maximization over polytopes plays a fundamental role in Computational Convexity. This section gives corollaries concerning the hardness of determining four important geometric functionals on polytopes.

If $P\subseteq \R^d$ is a polytope, we denote by $R(P,\B_p^d)$ ($r(P, \B_p^d)$, respectively) the circumradius\index{circumradius} (inradius)\index{inradius} of $P$ with respect to the $p$-norm. Further, similar to the notation in \cite{gk-92}, we write $R_1(P,\B_p^d)$ ($r_1(P,\B_p^d)$) for half of the width\index{width} (diameter)\index{diameter} of $P$, i.e.~half the radius of a smallest slab containing $P$ (half the length of the longest line segment contained in $P$).

For $p\in \N\cup\{\infty\}$, we consider the following problems:

\begin{prb}[\textsc{Circumdadius}$_p$-$\CH$] \label{prb:circumRadius}%
\begin{tabular}{ll}
\textbf{Input:} &$d\in \N$, $\gamma \in \Q$, rational $\CH$-presentation of a 0-symmetric polytope $P \subseteq \R^d$ \\
\textbf{Parameter:} & $d$ \\
\textbf{Question:}& Is $R(P,\B_p^d)^p \geq \gamma $?
\end{tabular}
\index{Circumradius@\textsc{Circumdadius}$_p$-$\CH$}
\end{prb}

\begin{prb}[\textsc{Diameter}$_p$-$\CH$] \label{prb:diameter}%
\begin{tabular}{ll}
\textbf{Input:} &$d \in \N$, $\gamma \in \Q$, rational $\CH$-presentation of a 0-symmetric polytope $P \subseteq \R^d$ \\
\textbf{Parameter:} & $d$ \\
\textbf{Question:}& Is $r_1(P, \B_p^d)^p \geq \gamma $?
\end{tabular}
\index{Diameter@\textsc{Diameter}$_p$-$\CH$}
\end{prb}

\paragraph{} It has been shown in \cite{gk-93} that Problems~\ref{prb:circumRadius} and \ref{prb:diameter} are solvable in polynomial time if $p=\infty$ and, by using an identity for symmetric polytopes from \cite{gk-92}, that both problems are $\NP$-hard when $p \in \N$. Using the same identity, we can establish (in-)tractability for both problems when parameterized by  the dimension:

\begin{cor}[Circumradius {\&} Diameter]\label{cor:radiiHard}%
For $p=1$, Problems~\ref{prb:circumRadius} and \ref{prb:diameter} are in FPT. For $p\in \N\setminus \{1\}$, both problems are W[1]-hard.

\end{cor}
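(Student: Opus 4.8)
The plan is to reduce \textsc{Circumradius}$_p$-$\CH$ and \textsc{Diameter}$_p$-$\CH$ directly to \normmax[p], relying on the W[1]-hardness of \normmax[p] for $p \in \N \setminus \{1\}$ established in Theorem~\ref{theo:mainTheo}, together with the symmetric-polytope identity from \cite{gk-92} that was already used in \cite{gk-93} to transfer $\NP$-hardness. The key observation is that for a $0$-symmetric polytope $P$, the quantities $R(P,\B_p^d)$ and $r_1(P,\B_p^d)$ are exactly the kind of ``optimize the $p$-norm over $P$'' functionals that \normmax[p] captures; the cited identity makes this precise, typically expressing the diameter (half-width) of the symmetric body in terms of a norm-maximization over $P$ (or its polar). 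Since all the maps involved in that identity are given by rational linear data and are computable in polynomial time, an instance of \normmax[p] maps to an instance of \textsc{Circumradius}$_p$-$\CH$ or \textsc{Diameter}$_p$-$\CH$ in the \emph{same} dimension $d$, so the parameter is preserved and the reduction is an FPT-reduction. The threshold $\gamma$ transfers verbatim because the question in both radius problems is posed about the $p$-th power $R(P,\B_p^d)^p$ (resp.~$r_1(P,\B_p^d)^p$), matching the $\|x\|_p^p \ge \gamma$ form of \normmax[p].

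Concretely, I would first recall the identity of \cite{gk-92} relating the circumradius/diameter of a $0$-symmetric polytope to a norm-maximization, and verify that for a polytope $P$ given in $\CH$-presentation this identity lets us read off $R(P,\B_p^d)^p = \max\{\|x\|_p^p : x \in P'\}$ for a polytope $P'$ that is computable from $P$ in polynomial time and lies in $\R^d$. Second, I would check that $0$-symmetry is preserved so that the transformed instance is a legal input to \normmax[p] (recall that the hard instances $\bar P$ built in Section~\ref{sec:complexityNormmax} are themselves $0$-symmetric by construction, so the reduction runs in the admissible instance class). Third, the W[1]-hardness then follows by composing the FPT-reduction \textsc{Clique} $\to$ \normmax[p] with the FPT-reduction \normmax[p] $\to$ \textsc{Circumradius}$_p$-$\CH$ (resp.~\textsc{Diameter}$_p$-$\CH$), using transitivity of FPT-reductions.

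For the tractability claim at $p=1$, I would invoke Theorem~\ref{theo:normmaxFPT}: the unit ball $\B_1^d$ is a polytope (the cross-polytope) whose $\CH$-presentation has $2^d$ facets and is trivially computable in time $f(d)=O(2^d)$, so $\|\cdot\|_1$ fits the \textsc{Max}-$\Phi$ framework and norm maximization, hence circumradius and diameter via the same identity, is in FPT. I would simply note that the identity of \cite{gk-92} expresses $R(P,\B_1^d)^1$ and $r_1(P,\B_1^d)^1$ as outputs of an $\ell_1$-norm maximization (or a polynomial number thereof over faces), each solvable in FPT time by Theorem~\ref{theo:normmaxFPT}.

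The main obstacle I expect is not the hardness transfer itself but getting the direction and the precise form of the \cite{gk-92} identity right: circumradius and (half-)diameter are dual-type functionals, and one must be careful whether the relevant norm maximization is over $P$, over its polar $P^\circ$, or over the difference body, and whether the $p$-norm on the ambient space or its dual $q$-norm appears. Since a polar/dual step can change an $\CH$-presentation into a $\CV$-presentation (which would make \normmax trivial rather than hard), I would have to confirm that the symmetric-polytope identity keeps the reduction inside the $\CH$-presented regime and in the same dimension; the fact that \cite{gk-93} already used exactly this identity to push $\NP$-hardness through in the unbounded-dimension setting makes me confident the dimension-preserving version goes through verbatim as an FPT-reduction.
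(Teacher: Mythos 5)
Your proposal is correct and follows essentially the same route as the paper: the identity $R(P,\B_p^d)^p = r_1(P,\B_p^d)^p = \max\{\|x\|_p^p : x \in P\}$ from \cite{gk-92} holds for $0$-symmetric polytopes with no polar or dual-norm step, so the radius problems coincide verbatim with \normmax[p] on the ($0$-symmetric) hard instances, and hardness/tractability transfer exactly as you describe. Your worry about the polar flipping $\CH$- into $\CV$-presentations is unfounded here; that duality step only enters for the Inradius/Width corollary, not for Circumradius/Diameter.
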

\begin{proof}
As shown in \cite[(1.3)]{gk-92}, for a 0-symmetric polytope $P\subseteq \R^d$, we have 
$$ R(P,\B_p^d)^p = r_1(P,\B_p^d)^p = \max \{\|x\|_p^p : x \in P\}.$$ 
Thus tractability or hardness of Problems~\ref{prb:circumRadius} and \ref{prb:diameter} follow from Theorem~\ref{theo:mainTheo}.
\end{proof}

\paragraph{} Additionally, let $q\in [1, \infty]$ be such that $1/p + 1/q=1$ (with $1/\infty = 0$).

\begin{prb}[\textsc{Inradius}$_p$-$\CV$] \label{prb:inRadius}%
\begin{tabular}{ll}
\textbf{Input:} &$d \in \N$, $\gamma \in \Q$, rational $\CV$-presentation of a 0-symmetric polytope $P \subseteq \R^d$ \\
\textbf{Parameter:} & $d$ \\
\textbf{Question:}& Is $r(P,\B_q^d)^p \leq \gamma $?
\end{tabular}
\index{Inradius@\textsc{Inradius}$_p$-$\CV$}
\end{prb}

\begin{prb}[\textsc{Width}$_p$-$\CV$] \label{prb:width}%
\begin{tabular}{ll}
\textbf{Input:} &$d\in \N$, $\gamma \in \Q$, rational $\CV$-presentation of a 0-symmetric polytope $P \subseteq \R^d$ \\
\textbf{Parameter:} & $d$ \\
\textbf{Question:}& Is $R_1(P, \B_q^d)^p \leq \gamma $?
\end{tabular}
\index{width@\textsc{Width}$_p$-$\CV$}
\end{prb}

\paragraph{} As for the previous two problems, the question of $\NP$-hardness of \textsc{Inradius}$_p$-$\CV$ and \textsc{Width}$_p$-$\CV$ has been studied in \cite{gk-93}. It is shown that Problems~\ref{prb:inRadius} and \ref{prb:width} are solvable in polynomial time if $p=1$ and by using an identity for symmetric polytopes from \cite{gk-92} that both problems are $\NP$-hard when $p \in \N$. Here again, we can use the same identity to establish (in-)tractability for both problems when parameterized by the dimension:

\begin{cor}[Inradius {\&} Width]
For $p=1$, Problems~\ref{prb:inRadius} and \ref{prb:width} are in FPT. For $p\in \N\setminus \{1\}$, both problems are W[1]-hard.
\label{cor:widthHard}
\end{cor}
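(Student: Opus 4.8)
The plan is to reduce Corollary~\ref{cor:widthHard} to the already-established hardness and tractability results for \normmax[q] by exhibiting an identity that expresses the inradius and width with respect to the $q$-norm as a norm maximization problem for the dual exponent $p$. Since the problems are stated with a $\CV$-presentation of a 0-symmetric polytope $P$ and radii taken with respect to $\B_q^d$, the natural tool is polarity: the polar $P^\circ$ of a 0-symmetric $P$ given in $\CV$-presentation has a $\CH$-presentation that is immediately readable off the vertices of $P$, and the polar of the unit ball $\B_q^d$ is exactly $\B_p^d$ because $1/p + 1/q = 1$. I would first recall the relevant identity from \cite{gk-92}, which, for a 0-symmetric polytope, should state something of the form
$$ r(P, \B_q^d) \cdot R(P^\circ, \B_p^d) = 1 \quad \text{and} \quad R_1(P, \B_q^d) \cdot r_1(P^\circ, \B_p^d) = 1, $$
so that computing the inradius (width) of $P$ in the $q$-norm is equivalent to computing the circumradius (diameter) of the polar polytope $P^\circ$ in the $p$-norm.

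The key steps, in order, are as follows. First I would take the instance $(d, \gamma, P)$ of \textsc{Inradius}$_p$-$\CV$ (respectively \textsc{Width}$_p$-$\CV$), with $P = \conv\{\pm v_1, \dots, \pm v_n\}$ a 0-symmetric polytope in $\CV$-presentation, and form the polar $P^\circ = \bigcap_{i=1}^n H_\leq(v_i, 1) \cap H_\leq(-v_i, 1)$, which is a 0-symmetric polytope in $\CH$-presentation computable in polynomial time and whose coding length is polynomially bounded in that of the input. Second, I would invoke the duality identity from \cite{gk-92} to rewrite the question ``Is $r(P, \B_q^d)^p \leq \gamma$?'' equivalently as a question of the form ``Is $R(P^\circ, \B_p^d)^p \geq \gamma'$?'' for an appropriately transformed threshold $\gamma'$ (namely $\gamma' = \gamma^{-1}$ or a similar reciprocal, to be pinned down from the exact form of the identity). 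Third, since Corollary~\ref{cor:radiiHard} already establishes that \textsc{Circumradius}$_p$-$\CH$ and \textsc{Diameter}$_p$-$\CH$ are W[1]-hard for $p \in \N \setminus \{1\}$ and in FPT for $p = 1$, and since the transformation $P \mapsto P^\circ$ together with the threshold transformation is a parameterized reduction preserving the parameter $d$ in both directions, the tractability and hardness transfer directly.

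The main obstacle I anticipate is bookkeeping around the exact form of the duality identity and the threshold transformation, rather than any conceptual difficulty: one must check that the reciprocal relationship $r(P,\B_q^d)\cdot R(P^\circ,\B_p^d)=1$ interacts correctly with the $p$-th powers and with the direction of the inequalities (the inradius/width problems ask ``$\leq \gamma$'' whereas the circumradius/diameter problems ask ``$\geq \gamma$'', and polarity reverses inclusions, so the reduction must flip the inequality in a way that is consistent). I would also need to confirm that the reduction preserves rationality of the threshold and that taking reciprocals keeps the coding length polynomial. For the $p = 1$ (equivalently $q = \infty$) case, the FPT claim follows because the polar problem is then \textsc{Circumradius}$_1$-$\CH$ or \textsc{Diameter}$_1$-$\CH$, which are in FPT by Corollary~\ref{cor:radiiHard}; here one should double-check the edge conventions $1/\infty = 0$ so that the dual exponent is handled correctly. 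Once these routine consistency checks are in place, the corollary follows exactly as Corollary~\ref{cor:radiiHard} does, via Theorem~\ref{theo:mainTheo}.
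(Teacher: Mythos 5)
Your proposal is correct and follows essentially the same route as the paper: both invoke the polarity identity from \cite{gk-92} (the paper states it as $R_j(P,\B_q^d)\,r_j(P^\circ,\B_p^d)=1$, which is equivalent to your form after swapping the roles of $P$ and $P^\circ$, since polarity is an involution and $(\B_q^d)^\circ=\B_p^d$), observe that polarity syntactically exchanges $\CH$- and $\CV$-presentations, and then transfer both the FPT and the W[1]-hardness results from Corollary~\ref{cor:radiiHard}. Your additional bookkeeping (the reciprocal threshold $\gamma'=\gamma^{-1}$, the flip of the inequality direction, and the remark that the reduction must run from \textsc{Circumradius}$_p$-$\CH$ \emph{to} \textsc{Inradius}$_p$-$\CV$ for the hardness direction) is exactly the routine detail the paper leaves implicit.
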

\begin{proof}
It is shown in \cite{gk-92} that if $P\subseteq \R^d$ is a 0-symmetric polytope and $P^\circ$ is its polar the identities
$$ R_j(P,\B_q^d)r_j(P^\circ, \B_p^d)= 1$$ hold for all $j \in [d]$.
As an $\CH$-presentation of $P^\circ$ is readily translated into a $\CV$-presentation of $P$, tractability or hardness of Problems~\ref{prb:inRadius} and \ref{prb:width} follow from Corollary~\ref{cor:radiiHard}.
\end{proof}

\paragraph{} The reductions of Corollaries~\ref{cor:radiiHard} and \ref{cor:widthHard} also show that the algorithm in the proof of Lemma~\ref{lem:approxFPT} can be used to compute the respective radii of a symmetric polytope $P \subseteq \R^d$ in the respective presentation. Lemma~\ref{lem:NoBetterApprox}, in turn, shows that in these cases the given running time is also best possible.

\bibliographystyle{plain}
\bibliography{references}
\end{document}